\crefname{figure}{Fig.}{Fig.}
\newtheorem{theorem}{Theorem}
\newtheorem{corollary}{Corollary}[theorem]
\newtheorem{lemma}{Lemma}
\theoremstyle{definition}
\theoremstyle{theorem}
\theoremstyle{definition}
\begin{document}

\title{Second-quantized fermionic operators with polylogarithmic\\ qubit and gate complexity}

\author{William Kirby}
\email{william.kirby@ibm.com}
\affiliation{
Department of Physics and Astronomy, Tufts University, Medford, MA 02155, USA
}

\affiliation{
IBM Quantum, T. J. Watson Research Center, Yorktown Heights, NY 10598, USA
}

\author{Bryce Fuller}
\affiliation{
IBM Quantum, T. J. Watson Research Center, Yorktown Heights, NY 10598, USA
}

\author{Charles Hadfield}
\affiliation{
IBM Quantum, T. J. Watson Research Center, Yorktown Heights, NY 10598, USA
}

\author{Antonio Mezzacapo}
\affiliation{
IBM Quantum, T. J. Watson Research Center, Yorktown Heights, NY 10598, USA
}

\begin{abstract}

We present a method for encoding second-quantized fermionic systems in qubits when the number of fermions is conserved, as in the electronic structure problem.
When the number $F$ of fermions is much smaller than the number $M$ of modes, this symmetry reduces the number of information-theoretically required qubits from $\Theta(M)$ to $O(F\log M)$.
In this limit, our encoding requires $O(F^2\log^4 M)$ qubits, while encoded fermionic creation and annihilation operators have cost $O(F^2\log^5 M)$ in two-qubit gates.
When incorporated into randomized simulation methods, this permits simulating time-evolution with only polylogarithmic explicit dependence on $M$.
This is the first second-quantized encoding of fermions in qubits whose costs in qubits and gates are both polylogarithmic in $M$, which permits studying fermionic systems in the high-accuracy regime of many modes.

\end{abstract}

\pacs{Valid PACS appear here}
\maketitle

\section{Introduction}
\label{intro}

Simulating systems of many interacting fermions is one of the most promising applications for quantum computers.
Many physical systems like molecules, whose accurate simulation would have great practical value, fall into this category.
Classically simulating a many-fermion Hamiltonian is believed to require resources growing exponentially with the system size.
A quantum computer, on the other hand, can simulate time evolution efficiently~\cite{lloyd1996quantumsimulators}, and although ground state problems of interacting Hamiltonians are QMA-complete~\cite{kitaev2002computation}, quantum computers have an exponential memory advantage in representing ground states of many-body systems, compared to classical methods.

Simulating a fermionic system on a quantum computer requires mapping fermionic states and operations to qubit states and operations.
The most well-known methods for accomplishing this are the Jordan-Wigner~\cite{jordanwigner1928} and Bravyi-Kitaev~\cite{bravyikitaev2002,seeley2012bravyikitaev} mappings, which both use one qubit per fermionic mode.
However, the electronic structure Hamiltonian conserves particle number, so one would like to simulate it in the subspace whose number of fermions matches that of the physical system under consideration.
For $F$ fermions in $M$ modes, the theoretical minimum number of qubits required for this is the log of the dimension of the $F$-fermion subspace,
\begin{equation}
\label{inf_thr_opt}
    Q^*=\log_2\begin{pmatrix}M\\F\end{pmatrix}~\xrightarrow{M\gg F}~F\log_2 M.
\end{equation}
The $M\gg F$ limit in \eqref{inf_thr_opt} is exponentially smaller in dependence on $M$ than the number of qubits $Q=M$ of Jordan-Wigner and Bravyi-Kitaev.

The $M\gg F$ limit is an important case in quantum chemistry because it corresponds to simulating a molecule at high accuracy by including many orbitals.
Here ``accuracy" means convergence to the continuum (physical) limit, which is a major but often overlooked problem for quantum simulation of chemistry.
The spin-orbital basis size $M$ required to achieve error $\epsilon$ relative to the continuum limit scales asymptotically as $1/\epsilon$ for reasonable bases~\cite{su2021firstquantization,klopper1995limitingvalues,helgaker1997basisset,halkier1998basisset,harl2008cohesiveenergycurves,hattig2012explicitlycorrelated,kong2012explicitlycorrelated,shepherd2012convergence,gruneis2013planewaves}. 
Hence although existing second-quantized algorithms scale ``efficiently" with $M$, meaning polynomially, the resulting costs are in fact exponential in the number of digits of accuracy relative to the continuum limit.

All prior second-quantized encodings incur these costs in either qubits or gates~\cite{bravyi2017tapering,steudtner2018fermions,steudtner2019fermions,babbush2017cimatrix,kirby2021compactmapping}.
However, for the reasons discussed above, in order to reach high accuracy in quantum chemistry simulations of large molecules, polylogarithmic scalings of both qubits and operations will ultimately become requirements.
In this paper, we present the first second-quantized fermion-to-qubit mapping whose costs in both qubits and operations (two-qubit gates to encode fermionic operators) are polylogarithmic in $M$.
Any quantum simulation algorithm that aspires to have polynomial scaling in the number of digits of accuracy relative to the continuum limit must be based on such a mapping.

\renewcommand{\arraystretch}{1.25}
\begin{table*}[]
    \centering
    \begin{tabular}{l|c|c|c|}
        Citation: & Encoding: & Qubits: & Gates: \\\hline
        Jordan-Wigner~\cite{jordanwigner1928} & Jordan-Wigner & $M$ & $O(M)$ \\
        Bravyi-Kitaev~\cite{bravyikitaev2002,seeley2012bravyikitaev} & Bravyi-Kitaev & $M$ & $O(\log M)$ \\
        Bravyi \emph{et al.}~\cite{bravyi2017tapering} & $Z_2$-symmetries & $M-O(1)$ & $O(M)$ \\
        Bravyi \emph{et al.}~\cite{bravyi2017tapering} & LDPC & $M-\frac{M}{F}$ & $O(M^3)$ \\
        Steudtner-Wehner~\cite{steudtner2018fermions,steudtner2019fermions} & segment & $M-\frac{M}{2F}$ & $O(F^2)$ \\
        Babbush \emph{et al.}~\cite{babbush2017cimatrix} & CI-matrix & $O(F\log M)$ & $O(M)$ \\[0.05in]
        this work & degree-$D$ & $O\left(M^\frac{1}{D+1}DF\log M\right)$ & $O(D^2F^2\log^3M)$ \\[0.07in]
        this work & optimal-degree & $O\left(F^2\log^4M\right)$ & $O\left(F^2\log^5M\right)$
    \end{tabular}
    \caption{Comparison of this work to prior work on encoding second-quantized fermionic Hamiltonians in qubits. ``Gates'' refers to the number of one- and two-qubit gates required to implement the encoding of a conjugate pair of fermionic creation and annihilation operators \eqref{fermionic_conj_pairs}. The exception is the CI-matrix encoding in~\cite{babbush2017cimatrix}, where ``gates" is the cost of the sparse oracle in this sparsity-based approach. The ``degree-$D$ code" in this work is parametrized by a positive integer $D$ that we can choose (subject to certain constraints) and that determines the properties of the code as shown. Choosing $D$ to minimize number of qubits in the $M\gg F$ limit yields the optimal-degree code.}
    \label{prior_work}
\end{table*}

More specifically, for an integer parameter $D$ called the ``degree,'' our encoding requires
\begin{equation}
\label{qubits}
    Q=O\left(M^\frac{1}{D+1}DF\log M\right)~\xrightarrow{M\gg F}~O\left(F^2\log^4M\right)
\end{equation}
qubits, where in the $M\gg F$ limit, $D$ is chosen to minimize the number of qubits.
The cost in two-qubit gates (all controlled phases) and single-qubit gates of an encoded fermionic operator is
\begin{equation}
\label{term_cost_intro}
    O(D^2F^2\log^3M)~\xrightarrow{M\gg F}~O\left(F^2\log^5M\right).
\end{equation}
The cost of implementing a rotation generated by such an operator is the same expression, but in doubly-controlled (i.e., three-qubit) instead of singly-controlled gates.
For comparison, the cost of an oracle query in~\cite{babbush2017cimatrix} is $\Theta(M)$ (which is better than the cost in~\cite{kirby2021compactmapping}; the operation cost for the binary addressing code in~\cite{steudtner2018fermions,steudtner2019fermions} is not analyzed.)
\cref{prior_work} summarizes the comparison of our encoding to prior work.

This paper focuses on encodings of second-quantized fermionic systems, but first-quantized fermion-to-qubit mappings also exist.
Some of these achieve polylogarithmic qubit cost and sublinear gate cost, but only do so for specific basis sets, and require explicit antisymmetrization of the wavefunction~\cite{babbush2019sublinearscaling,su2021firstquantization}.
Our encoding achieves polylogarithmic qubit and gate costs within second-quantization, avoiding these constraints.

Our encoding will be applied in the context of a quantum simulation algorithm.
Many of these have costs that scale polynomially with the number of terms in the Hamiltonian.
For second-quantized electronic structure, the number of terms scales naively as $O(M^4)$, which can sometimes be reduced (e.g.,~\cite{babbush2018lowdepth}) but always remains polynomial in $M$.
Therefore, in a simulation algorithm whose cost is polynomial in the number of terms, the impact of our encoding is reduced because the overall cost of the simulation becomes polynomial in $M$ anyway.

However, some simulation methods based on randomized compiling do not scale explicitly with the number of terms, but instead with the sum $\lambda$ of magnitudes of the Hamiltonian coefficients.
For example, qDRIFT~\cite{campbell2019qdrift} requires $O((\lambda t)^2/\epsilon)$ gates to simulate evolution for a time $t$ with error $\epsilon$, where the gates are rotations generated by terms in the Hamiltonian.
As noted above, in our encoding the cost of implementing such a rotation is given by \eqref{term_cost_intro} in doubly-controlled gates, yielding an overall simulation cost of
\begin{equation}
    O\left(\frac{(\lambda t F)^2}{\epsilon}\log^5 M\right)
\end{equation}
doubly-controlled gates in the $M\gg F$ limit.
The only explicit dependence on $M$ in this formula is polylogarithmic, and this is first quantum simulation algorithm for the electronic structure problem with that property that also only requires polylogarithmically many qubits.
The important caveat to this claim is that the polynomial dependence on the Hamiltonian is now via $\lambda$, and the scaling of $\lambda$ with $M$ is not well-characterized in general for the electronic structure problem.
However, since the coefficients in an electronic structure Hamiltonian vary dramatically in magnitude, scaling with $\lambda$ is much better than scaling with $M$.

Another algorithm well-suited for our encoding is the randomized phase estimation algorithm of~\cite{wan2021phaseestimation}, which uses $\widetilde{O}(1/\eta^2)$ quantum circuits of $\widetilde{O}(\lambda^2/\Delta^2)$ Pauli rotations each to implement phase estimation with additive precision $\Delta$ for a state of overlap at least $\eta$ with the true ground state.
This algorithm is defined for a Hamiltonian decomposed into Pauli operators, but for our encoding the Pauli operators may be replaced by our encoded operators, and the Pauli rotations may be replaced by rotations generated by our encoded operators.
Each circuit in the algorithm will then require
\begin{equation}
    O\left(\frac{(\lambda F)^2}{\Delta^2}\log^5 M\right)
\end{equation}
doubly-controlled gates.

\subsection{Preliminaries}
\label{preliminaries}

We begin with a second-quantized, fermion number conserving Hamiltonian $H$ acting on $M$ modes, i.e., a linear combination of products of creation and annihilation operators $\hat{a}^\dagger_i$ and $\hat{a}_i$.
The example we will bear most strongly in mind is the second-quantized electronic structure Hamiltonian
\begin{equation}
\label{elec_struct}
    H=\sum_{ij}h_{ij}a_i^\dagger a_j+\sum_{ijkl}h_{ijkl}a_i^\dagger a_j^\dagger a_ka_l,
\end{equation}
where the indices $i,j,k,l$ run over the modes (spin-orbitals), and the coefficients $h_{ij}$ and $h_{ijkl}$ are the one- and two-body integrals, respectively~\cite{mcardle2020quantumchem}.
The electronic structure Hamiltonian is of particular interest because, in addition to being of great importance in computational chemistry, in this case the $M\gg F$ limit corresponds to studying a fixed molecule (with a fixed number of electrons) in the high precision limit (many modes), as discussed above.

The Bravyi-Kitaev (BK) transformation~\cite{bravyikitaev2002,seeley2012bravyikitaev} maps fermionic states and operators to qubit states and operators such that the conjugate pairs
\begin{equation}
\label{fermionic_conj_pairs}
    \hat{a}^\dagger_i+\hat{a}_i\quad\text{and}\quad i(\hat{a}^\dagger_i-\hat{a}_i)
\end{equation}
(i.e., Majorana operators) are mapped to Pauli operators containing $O(\log M)$ nonidentity single qubit Pauli matrices (see~\cite[eq. (39-40)]{seeley2012bravyikitaev} as well as more detailed discussion in \cref{encoding_operations}).
Although individual creation and annihilation operators are neither unitary nor Hermitian, the conjugate pairs \eqref{fermionic_conj_pairs} are both, and the Hamiltonian \eqref{elec_struct} may be rewritten as a linear combination of products of these~\cite{bravyikitaev2002}.
Hence, under the BK mapping the Hamiltonian becomes a linear combination of Pauli operators:
\begin{equation}
\label{H_init}
    H=\sum_{P\in\mathcal{P}^{\otimes M}}h_PP,
\end{equation}
where $\mathcal{P}$ is the set of single-qubit Pauli matrices and identity, and the $h_P$ are real coefficients.

We want to simulate this Hamiltonian within the $F$-fermion subspace.
In the BK mapping, each occupation number state is represented as a bitstring $b$ whose entries correspond to parities of subsets of the fermionic modes.
We will refer to these as \emph{BK bitstrings}.
For a single-fermion state (i.e., an occupation number state in which a single mode is occupied), the corresponding BK bitstring contains at most $\left\lceil\log_2(M+1)\right\rceil$ $1$s.
The BK mapping is linear, so the BK bitstring corresponding to a multi-fermion state is the bitwise sum of the single-fermion BK bitstrings corresponding to occupied modes.
Hence, a BK bitstring $b$ corresponding to an occupation number state of $F$ fermions has Hamming weight at most $F\left\lceil\log_2(M+1)\right\rceil$, which we denote
\begin{equation}
\label{Hamming_weight_def}
    |b|\le F\left\lceil\log_2(M+1)\right\rceil\equiv G,
\end{equation}
i.e., $b$ contains at most $G=F\left\lceil\log_2(M+1)\right\rceil$ $1$s.
While in the Jordan-Wigner encoding, the Hamming weight is exactly equal to $F$, the Bravyi-Kitaev Hamming weight bound \eqref{Hamming_weight_def} will be sufficient for us to exploit fermion number conservation, and indeed our encoding will apply to any bitstrings up to and including Hamming weight $G$.
Although the BK mapping is typically used to map a fermionic Hamiltonian to a qubit Hamiltonian, we will think of $H$ in \eqref{H_init} as the unencoded Hamiltonian that will be the starting point for our encoding.

\section{Encoding states}
\label{encoding_states}

We will encode the Hamiltonian $H$ in \eqref{H_init} in a qubit Hamiltonian that acts on $Q<M$ qubits.
The encoding $\mathcal{E}$ will satisfy several properties:
\begin{enumerate}
    \item $\mathcal{E}$ maps occupation number states containing up to $F$ fermions, i.e., BK bitstrings of Hamming weight up to $G$, to qubit computational basis states.
    \item $\mathcal{E}$ is linear on bitwise addition $\oplus$ of bitstrings (bitwise XOR), i.e.,
    \begin{equation}
    \label{linear}
        \mathcal{E}(a\oplus b)=\mathcal{E}(a)\oplus\mathcal{E}(b)
    \end{equation}
    for two BK bitstrings $a,b$.
    \item The $i$th bit in $b$ ($b_i$) is associated to a set $S_i$ of qubits such that for an up to $F$-fermion state, $b_i=1$ if and only if in the encoded state more than half of the qubits in $S_i$ are $1$.
    \item $\mathcal{E}$ is invertible for occupation number states containing up to $F$ fermions, i.e., BK bitstrings $b$ with $|b|\le G$. (This follows from property 3.)
\end{enumerate}
Since the occupation number states form a basis for the fermionic Hilbert space, properties 1, 2, and 4 imply that the map $\mathcal{E}$ extends to an invertible linear transformation sending the space of $F$-fermion wavefunctions into a subspace of the $Q$-qubit Hilbert space.
We will call $\mathcal{E}(b)$ the \emph{codeword} for $b$, where $b$ is a BK bitstring.
The span of the codewords will be called the \emph{codespace}, and not every qubit computational basis state must be a codeword, so the codespace is not necessarily the entire $Q$-qubit Hilbert space.

In the next section, we will use the third property above to construct efficient implementations of encoded fermionic operators.
The first two properties imply that the encoding is defined by its action on BK bitstrings with Hamming weight one, which we call \emph{elementary bitstrings}.
By linearity, if we specify the encodings of these, which we call \emph{elementary codewords}, then the encoding of any higher-weight BK bitstring $b$ is the bitwise sum of the elementary codewords corresponding to the $1$s in $b$.
Hence, the fourth property (invertibility) will hold if and only if bitwise sums of up to $G$ of the elementary codewords are unique.

To guarantee that properties three and four above hold, the encoding we construct will satisfy the following sufficient conditions: if $\alpha=\mathcal{E}(a)$ and $\beta=\mathcal{E}(b)$ are elementary codewords for different elementary bitstrings $a$ and $b$, then
\begin{equation}
\label{code_properties}
\begin{split}
    &\text{len}(\alpha)=\text{len}(\beta)=Q,\\
    &|\alpha|=|\beta|=L,\\
    &\alpha\cdot\beta=\sum_{i=0}^{Q-1}\alpha_i\beta_i\le D<\frac{L}{2G},
\end{split}
\end{equation}
where $D$ is some maximum allowed overlap of the codewords.
If $b$ is the elementary bitstring in which (only) bit $b_i=1$, the $L$ $1$s in $\mathcal{E}(b)$ are exactly the qubits in $S_i$.
Sets that satisfy \eqref{code_properties} also satisfy properties three and four of $\mathcal{E}$, above, which we prove as \cref{overlap_lemma} in \cref{proofs}.

Having established the properties that $\mathcal{E}$ must satisfy, we can now specify $\mathcal{E}$ by constructing the elementary codewords.
For fixed $G=F\left\lceil\log_2(M+1)\right\rceil$, $\mathcal{E}$ is parametrized by positive integers $L$ and $D$ satisfying \eqref{code_properties}.
For a given $D$, we will later want $L$ to be as small as possible, so we will choose
\begin{equation}
\label{L_min}
    L=2DG+1.
\end{equation}

Let the range of $\mathcal{E}$ be the computational basis states of $Q=L'L$ qubits, which are partitioned into $L$ blocks of $L'$ qubits where $L'$ is a prime number lower bounded by $L$.
In each block, one of the qubits will be $1$ and the others will be $0$, so $L$ qubits in total are $1$, i.e., the elementary codewords have Hamming weight $L$, as required by \eqref{code_properties}.
Each elementary codeword is thus equivalent to a function $y:\mathbb{Z}_L\rightarrow\mathbb{Z}_{L'}$ (where $\mathbb{Z}_n$ denotes the ring of integers modulo $n$), which maps the index $x$ of a block to the position $y(x)$ of the $1$ in that block.
Examples of this mapping are given in Figs.~\ref{encoding_examples} and~\ref{graph_linearization_example}.

\begin{figure}[t]
    \centering
    \begin{equation*}
    \begin{split}
        y(x)=0\quad\leftrightarrow\quad&10000\,10000\,10000\,10000\,10000\\
        y(x)=x\quad\leftrightarrow\quad&10000\,01000\,00100\,00010\,00001\\
        y(x)=2+x\quad\leftrightarrow\quad&00100\,00010\,00001\,10000\,01000\\
        y(x)=x^2\quad\leftrightarrow\quad&10000\,01000\,00001\,00001\,01000\\
    \end{split}
    \end{equation*}
    \caption{Examples of the correspondence between functions from $\mathbb{Z}_L$ to $\mathbb{Z}_{L'}$ and elementary codewords. We have inserted spaces between the $L=5$ blocks of $L'=5$ qubits. $y(x)$ is the index of the $1$ in the $x$th block of qubits.}
    \label{encoding_examples}
\end{figure}

\begin{figure}
    \centering
    \includegraphics[width=\columnwidth]{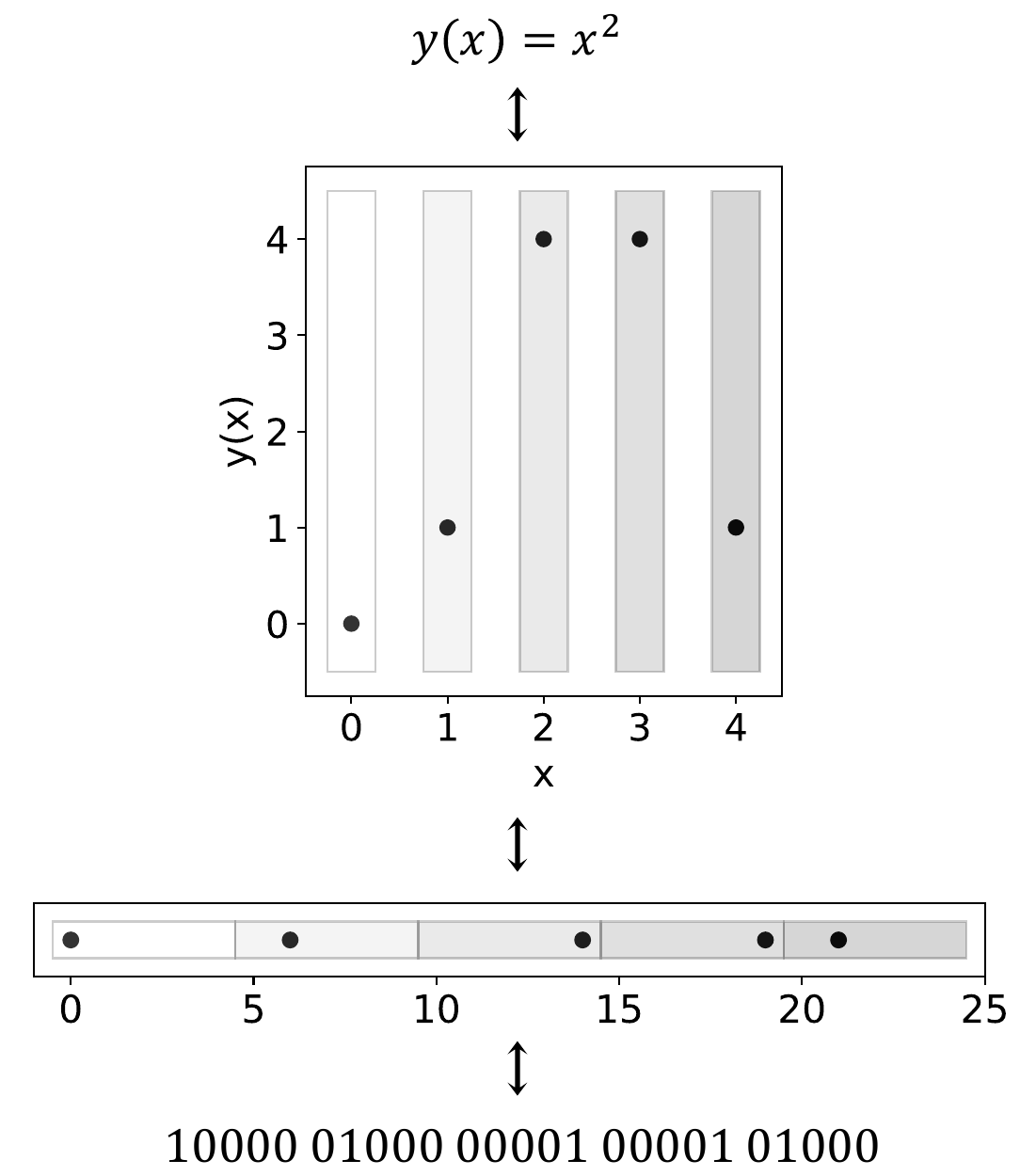}
    \caption{The mapping from a function over $\mathbb{Z}_{L'}$ to an elementary codeword may be viewed as a linearization of the graph of the function, as illustrated in this figure for the function $y(x)=x^2$ over $\mathbb{Z}_5$. To obtain the linearization, the columns of the graph of $y(x)$ are laid out in order horizontally: the locations of the points in the linearization give the locations of the $1$s in the corresponding codeword. The shading of the columns in the graph matches their shading in its linearization.}
    \label{graph_linearization_example}
\end{figure}

We want $D$ to be an upper bound on the overlaps of the elementary codewords, as in \eqref{code_properties}.
Since for any pair $\alpha,\beta$ of elementary codewords, each contains a single $1$ in each block of qubits and the corresponding functions $y_\alpha$ and $y_\beta$ give the locations of the $1$s, this is equivalent to $D$ being an upper bound on the number of intersections of $y_\alpha$ and $y_\beta$.
Therefore, let $y_\alpha$ and $y_\beta$ be distinct degree-$D$ polynomials over $\mathbb{Z}_{L'}$, with their domains restricted to $\mathbb{Z}_L\subseteq\mathbb{Z}_{L'}$.
In this case, their difference is also a polynomial of degree at most $D$, so it can have at most $D$ roots.
Hence $y_\alpha$ and $y_\beta$ can intersect in at most $D$ points, and thus the corresponding codewords can overlap in at most $D$ bits, as desired.
Technically, $y_\alpha$ and $y_\beta$ are \emph{polynomial functions} (as opposed to formal algebraic polynomials), but here we refer to them as polynomials for simplicity; see \cref{finite_fields_app} for details and a review of the properties of polynomials over finite fields.

It follows that if each of our elementary codewords corresponds to a distinct degree-$D$ polynomial over $\mathbb{Z}_{L'}$ as described above, its overlap with any other elementary codeword is upper bounded by $D$.
This still holds if we include all polynomials of degree at most $D$.
There are $(L')^{D+1}$ distinct polynomials of degree at most $D$ over $\mathbb{Z}_{L'}$, since each polynomial is uniquely specified by its coefficients, and each of the $D+1$ coefficients of $x^0,x^1,...,x^D$ is one of the $L'$ elements of $\mathbb{Z}_{L'}$ (this relies on the fact that $\mathbb{Z}_{L'}$ has prime order and that $D<L'$; see \cref{finite_fields_app}).
We encode one Bravyi-Kitaev bit in each of the corresponding codewords, and the number of modes is equal to the number of Bravyi-Kitaev bits, so we encode $(L')^{D+1}$ modes in $L'L$ qubits.

Hence as long as $D\ge2$, or $L'>L$ and $D\ge1$, this encoding permits $Q<M$.
The $D=0$ case reduces to the Bravyi-Kitaev encoding: by \eqref{L_min}, the elementary codewords have Hamming weight $L=1$, so there is a single block of $L'$ qubits, and the $L'$ degree-$0$ polynomials (constants) simply give the possible locations of the single $1$.

For generic values of $L$, $L'$, $M$, and $D$, we could partition our qubits into subsets of size $L'L$, and use each subset to encode $(L')^{D+1}$ modes as described above: this would require
\begin{equation}
\label{qubits_exact_partition}
    Q=\left\lceil\frac{M}{(L')^{D+1}}\right\rceil L'L
\end{equation}
qubits.
However, from \eqref{qubits_exact_partition} we can see that for fixed $D$ (and thus $L$), it is in fact best to use only one such subset, in which case we must choose $L'$ to be the least prime such that
\begin{equation}
\label{Lprime}
    (L')^{D+1}\ge M
\end{equation}
(provided $L'\ge L$).
This is the minimum value of $L'$ such that all of the modes are encoded in a single set of $L'L$ qubits, so choosing $L'$ larger than this would be disadvantageous.
This value of $L'$ yields the number of qubits required to encode $F$ fermions in $M$ modes via the degree-$D$ code: by \eqref{L_min}, \eqref{qubits_exact_partition}, and \eqref{Lprime},
\begin{equation}
\label{qubits_exact}
\begin{split}
    Q&=L'L\\
    &=M^\frac{1}{D+1}(2DF\left\lceil\log_2(M+1)\right\rceil+1)+O\left(F\log^2 M\right)
\end{split}
\end{equation}
on average, since by the prime number theorem, the least prime greater than $M^\frac{1}{D+1}$ exceeds it by $O\left(\log\left(M^\frac{1}{D+1}\right)\right)=O\left(\frac{\log M}{D}\right)$ on average.
Although the above is an average-case statement, by the Bertrand-Chebyshev Theorem, the least prime greater than $M^\frac{1}{D+1}$ is upper-bounded by $2M^\frac{1}{D+1}$, so $Q$ can never be worse than twice the first term in the second line of \eqref{qubits_exact}.

The $M\gg F$ limit of our encoding is an important case in practice, as discussed in the introduction.
The ideal number of qubits in this limit is given in \eqref{inf_thr_opt}. The performance of our code in this limit is given by the following theorem:
\begin{theorem}
\label{performance_theorem}
    In the $M\gg F$ limit, our code satisfies
    \begin{equation}
        Q=O\left(F^2\log^4 M\right),
    \end{equation}
    with $D$ satisfying $D=O(\log M)$.
\end{theorem}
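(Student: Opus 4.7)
The plan is to exhibit a specific choice $D=\lceil\log_2 M\rceil$, verify the constraints from \eqref{code_properties}--\eqref{Lprime}, and then compute $Q$ directly from \eqref{qubits_exact}. With this $D$, the exponent satisfies $M^{1/(D+1)}\le M^{1/\log_2 M}=2$, so the lower bound $L'\ge M^{1/(D+1)}$ from \eqref{Lprime} collapses to a constant and becomes non-binding.

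Next I would identify what actually constrains $L'$. Once the $M^{1/(D+1)}$ bound is trivial, the nontrivial requirements on $L'$ are primality together with $L'\ge L$ (needed for the codeword construction) and $L'>D$ (needed so that degree-$D$ polynomial functions over $\mathbb{Z}_{L'}$ behave as in \cref{finite_fields_app}). From \eqref{L_min} and \eqref{Hamming_weight_def},
\begin{equation*}
L=2DG+1=2\lceil\log_2 M\rceil\cdot F\lceil\log_2(M+1)\rceil+1=\Theta(F\log^2 M).
\end{equation*}
By the Bertrand--Chebyshev theorem there is a prime $L'$ with $L\le L'\le 2L$, so I take $L'=O(F\log^2 M)$. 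In the $M\gg F$ regime this $L'$ also dominates $D=\lceil\log_2 M\rceil$, so $L'>D$ is automatic.

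Then I combine the pieces: using $Q=L'L$ from \eqref{qubits_exact} and absorbing the additive $O(F\log^2 M)$ correction,
\begin{equation*}
Q=L'\cdot L=O(F\log^2 M)\cdot\Theta(F\log^2 M)=O(F^2\log^4 M),
\end{equation*}
with $D=O(\log M)$, as claimed.

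The only substantive point, and really the heart of the argument, is the trade-off between the two competing lower bounds on $L'$: the bound $L'\ge M^{1/(D+1)}$ shrinks rapidly with $D$ while the bound $L'\ge L=\Theta(DF\log M)$ grows linearly with $D$. They balance near $D\sim\log M$, and any smaller $D$ leaves $M^{1/(D+1)}$ dominant and blows $Q$ up superpolylogarithmically in $M$. Choosing $D=\lceil\log_2 M\rceil$ is essentially the smallest value that drives the first bound to a constant, after which the second bound together with Bertrand's theorem immediately yields the stated $Q=O(F^2\log^4 M)$ while keeping $D=O(\log M)$.
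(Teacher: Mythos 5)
Your proof is correct, and it takes a genuinely different and in fact more economical route than the paper's. The paper treats the theorem as an optimization problem: it writes the qubit count as (essentially) $\max\{DM^{1/(D+1)},\,2D^2F\lceil\log_2 M\rceil+1\}$, differentiates $DM^{1/(D+1)}$ to locate its critical points, argues that the minimum of the max occurs where the two arguments cross (i.e.\ $M^{1/(D+1)}=2DF\lceil\log_2 M\rceil+1$), and only then bounds the resulting $D$ by $\log M$ to extract $Q=O(F^2\log^4 M)$. You instead observe that the theorem is a pure upper bound, so it suffices to exhibit one admissible $D$; choosing $D=\lceil\log_2 M\rceil$ collapses $M^{1/(D+1)}$ to a constant, leaves $L'\ge L=\Theta(F\log^2 M)$ as the binding constraint, and Bertrand--Chebyshev gives a prime $L'\le 2L$, whence $Q=L'L=O(F^2\log^4 M)$ immediately. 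This is a complete proof of the stated claim and avoids the calculus entirely. What the paper's longer argument buys is the characterization of the \emph{optimal} degree via the balance condition $M^{1/(D+1)}=2DF\lceil\log_2 M\rceil+1$, which the main text relies on for concrete parameter choices at finite $M$ (e.g.\ the water-molecule examples where the optimal $D$ is $1$ or $2$); your argument deliberately does not recover that. One small caution about your closing remark: the claim that ``any smaller $D$ leaves $M^{1/(D+1)}$ dominant and blows $Q$ up superpolylogarithmically'' is too strong --- for $D=c\log M$ with any constant $c>0$ one still has $M^{1/(D+1)}=O(1)$ and $Q=O(F^2\log^4 M)$; the blow-up only occurs for $D=o(\log M)$. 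Since that remark is commentary rather than a step of the proof, it does not affect correctness.
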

\noindent
The proof can be found in \cref{proofs}.

Finally, as discussed above, $D=1$ is the least degree for which our encoding can be advantageous over the Bravyi-Kitaev encoding.
In this case, $L'L=\Theta(\sqrt{M}F\log M)$, so our code is asymptotically advantageous when $F=O(M^{1/2-\epsilon})$ for $\epsilon>0$.
As a non-asymptotic example, consider a water molecule, which contains $F=10$ electrons: in this case our code becomes advantageous over the Bravyi-Kitaev encoding when ${M\ge118328}$.
Beyond this point the qubit cost for our code grows much more slowly than $M$: for example, when $M=10^6$ the optimal value of $D$ is still $1$, and our code requires $\sim4\times10^5$ qubits, and when $M=10^7$ the optimal value of $D$ is $2$ and our code requires $\sim9\times10^5$ qubits.
The exact point at which our code becomes preferable over other options in general is discussed in \cref{threshold_app}.
For smaller $M$, we recommend using the ``segment code'' of \cite{steudtner2018fermions,steudtner2019fermions} (see \cref{prior_work}), for which operations can be implemented efficiently using the construction in the proof of \cref{parity_theorem}.
This is discussed in \cref{application_to_segment_code}.
The segment code becomes advantageous over Bravyi-Kitaev when $F\le\frac{M}{2}-1$ and yields $Q\approx\left(1-\frac{1}{2F}\right)M$, so it bridges the gap to the large-$M$ regime where our encoding becomes preferable.

\section{Encoding operations}
\label{encoding_operations}

In the Bravyi-Kitaev mapping, a conjugate pair of fermionic operators as in \eqref{fermionic_conj_pairs} is mapped to a Pauli operator with nonidentity action on $O(\log M)$ qubits, i.e., bits in the BK bitstring $b$~\cite{bravyikitaev2002,seeley2012bravyikitaev}.
Up to a phase $\pm1$ or $\pm i$, each such Pauli operator can be expressed as a product of $O(\log M)$ single-qubit Pauli operators $X$ and $Z$.
Let us denote these as $X^{(BK)}_i$ and $Z^{(BK)}_i$, where $i$ indexes the bit $b_i$ they act upon; operators on the codespace will be written with no superscript.

If we can implement the encodings of $X^{(BK)}_i$ and $Z^{(BK)}_i$ as unitaries on the codespace, we can implement any term in the Hamiltonian \eqref{H_init} as a unitary operator.
This means that we can implement the Hamiltonian as a linear combination of unitaries and simulate time-evolution~\cite{childs2012lcu,berry2015truncatedtaylor,berry2015nearlyoptimal,low2017signalprocessing,low2018interactionpicture,low2019qubitization,berry2020l1norm,campbell2019qdrift,wan2021phaseestimation}, with the randomized algorithms of~\cite{campbell2019qdrift,wan2021phaseestimation} most likely being the best choices for our encoding as discussed in the introduction. Alternatively, we can estimate the expectation value of each term via Hadamard tests and implement a variational quantum eigensolver (VQE) that searches for the Hamiltonian's ground state energy~\cite{peruzzo2014vqe,kirby2021sparsevqe}.

Each bit $i$ in the BK bitstring is associated to some set $S_i$ containing the indices of the $L$ qubits that are $1$ in the corresponding elementary codeword (see \eqref{code_properties} and the corresponding discussion, above).
Hence, because our encoding is linear \eqref{linear}, the encoding of $X^{(BK)}_i$ is
\begin{equation}
\label{X_mapping}
    \mathcal{E}\Big(X^{(BK)}_i\Big)=\prod_{j\in S_i}X_j,
\end{equation}
i.e., bitflips on all bits that are 1 in the elementary codeword corresponding to $i$ (we abuse the notation $\mathcal{E}$ to denote the encoding of operators as well as of states).

To implement the encoding of $Z^{(BK)}_i$, we use the fact that unencoded bit $b_i=1$ if and only if more than half of the code qubits in $S_i$ are $1$ (property 3 in~\cref{encoding_states}).
Any computational basis state $|q\rangle$ of the code qubits is an eigenstate of $\sum_{j\in S_i}Z_j$, and by the previous sentence, the eigenvalue is negative if and only if $b_i=1$.
This means that, for integer $z_i$ defined by
\begin{equation}
\label{Z_sum}
    \sum_{j\in S_i}Z_j|q\rangle=z_i|q\rangle,
\end{equation}
we have
\begin{equation}
\label{Z_action}
    \mathcal{E}\Big(Z^{(BK)}_i\Big)|q\rangle=
    \begin{cases}
        |q\rangle\quad\text{if $z_i>0$},\\
        -|q\rangle\quad\text{if $z_i<0$}.
    \end{cases}
\end{equation}
Note that since $|S_i|$ is odd, $z_i\neq0$.

Hence, we just need to implement the ``majority-vote" operation given by \eqref{Z_action} for any set $S_i$ of $L$ qubits.
We can accomplish this by observing that $\sum_{j\in S_i}Z_j$ has only $L+1$ distinct eigenvalues, so we can use Hermite interpolation~\cite{burden2015numerical} to efficiently express the desired operation as a polynomial of $\sum_{j\in S_i}Z_j$ (really of a rescaling of $\sum_{j\in S_i}Z_j$).
We can then exactly implement this polynomial using quantum signal processing~\cite{low2017signalprocessing,low2019qubitization}.
The details are given in the proof of the following theorem:
\begin{theorem}
\label{parity_theorem}
    The operation $\mathcal{E}\Big(Z^{(BK)}_i\Big)$ defined by \eqref{Z_action} can be implemented using
    \begin{equation}
    \label{parity_cost}
    \begin{split}
        L(2L-1)=&~8D^2F^2\lceil\log_2(M+1)\rceil^2\\
        &~+6DF\lceil\log_2(M+1)\rceil+1\\
        =&~O(D^2F^2\log^2M)
    \end{split}
    \end{equation}
    controlled phases and single-qubit gates, and one ancilla qubit.
\end{theorem}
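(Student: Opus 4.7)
The plan is to reduce $\mathcal{E}\bigl(Z^{(BK)}_i\bigr)$ to applying the matrix sign function of the rescaled operator $A := \frac{1}{L}\sum_{j\in S_i} Z_j$. Because the qubits in $S_i$ are distinct and each $Z_j$ has spectrum $\{\pm1\}$, the diagonal operator $A$ has exactly $L+1$ eigenvalues $\lambda_k = (L-2k)/L$ for $k=0,1,\ldots,L$, and since $L=2DG+1$ is odd none of these is $0$. Rewriting \eqref{Z_action} in terms of $A$ shows that $\mathcal{E}\bigl(Z^{(BK)}_i\bigr)$ acts on a computational basis state $|q\rangle$ as $\mathrm{sign}(\lambda_q)\,|q\rangle$, so it suffices to synthesize the operator $\mathrm{sign}(A)$ on the code qubits.

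I would then represent $\mathrm{sign}$ exactly on the spectrum of $A$ by a polynomial $p$ built via Hermite interpolation. Imposing $p(\lambda_k) = \mathrm{sign}(\lambda_k)$ for every $k$ together with $p'(\lambda_k) = 0$ at every \emph{interior} node gives $(L+1)+(L-1)=2L$ independent conditions, uniquely determining a (by symmetry of the nodes, odd) polynomial of degree $2L-1$. The most delicate step, which I expect to be the main obstacle, is verifying the admissibility bound $|p(x)|\le 1$ for all $x\in[-1,1]$; this is the key hypothesis required by quantum signal processing, and it should follow from a Chebyshev-type extremal argument: the $L-1$ interior interpolation nodes are simultaneously extrema of $p$ of magnitude exactly $1$ with vanishing first derivative, which forces $p$ to stay within the band $[-1,1]$ on each subinterval between consecutive nodes.

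With such a $p$ in hand, I would invoke the qubitization-based QSP framework of Low and Chuang. Using the single ancilla qubit, a block-encoding $U_A$ of $A$ can be assembled from $L$ two-qubit controlled phases, one per qubit of $S_i$, together with $O(1)$ single-qubit gates on the ancilla, exploiting that $A$ is a sum of $L$ mutually commuting diagonal Pauli operators. QSP then synthesizes $p(A)$ exactly using $2L-1$ calls to $U_A$ (matching $\deg p$) interleaved with single-qubit phase rotations on the ancilla whose angles are determined by the coefficients of $p$. Projecting onto the $|0\rangle_a$ ancilla subspace gives $p(A)=\mathrm{sign}(A)=\mathcal{E}\bigl(Z^{(BK)}_i\bigr)$ on the codespace. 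Totaling, the cost is $L\cdot(2L-1)$ controlled phases with the subleading single-qubit rotations absorbed into the same count; substituting $L = 2DF\lceil\log_2(M+1)\rceil+1$ reproduces \eqref{parity_cost}. The construction uses exactly one ancilla throughout, since the same qubit serves as the block-encoding ancilla and as the QSP signal qubit.
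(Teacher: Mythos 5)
Your overall strategy---reduce $\mathcal{E}\bigl(Z^{(BK)}_i\bigr)$ to a sign function of a diagonal operator, realize that sign function as a degree-$(2L-1)$ Hermite interpolant, and synthesize it with quantum signal processing at cost $L(2L-1)$---is the same as the paper's. But there are two concrete gaps. First, your block-encoding step does not work as described: a single ancilla conjugated by Hadamards around a product of $L$ controlled phases $\text{ctrl-}e^{i\theta Z_j}$ block-encodes $\cos\bigl(\theta\sum_{j\in S_i}Z_j + \text{const}\bigr)$, not the linear combination $A=\frac{1}{L}\sum_{j\in S_i}Z_j$ itself; block-encoding the linear sum directly would require an LCU-style PREPARE/SELECT structure with $O(\log L)$ ancillas. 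The paper accepts this and defines $\mathcal{H}_i=\cos\bigl(\tfrac{\pi}{2}(\mathds{1}-A)\bigr)$, so the interpolation nodes become the Chebyshev-type points $\cos(m\pi/L)$ rather than your equally spaced points $(L-2k)/L$. This is not cosmetic: the admissibility of the interpolant depends on the node distribution, and equally spaced nodes are exactly the regime where Hermite interpolants oscillate badly near the endpoints.

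Second, and more seriously, your ``Chebyshev-type extremal argument'' for $|p(x)|\le1$ on $[-1,1]$ is not an argument. Knowing that $p$ attains $\pm1$ with vanishing derivative at the interior nodes does not control the depth of the local extremum that $p$ must have \emph{between} two consecutive nodes sharing the same value: between two adjacent $+1$ nodes the polynomial dips to a local minimum whose value is not bounded by the interpolation conditions alone. This is precisely the point the paper could not resolve analytically. Its \cref{interpolation_lemma} proves only the one-sided bounds $A(\lambda)\le1$ on $[0,1]$ and $A(\lambda)\ge-1$ on $[-1,0]$ (via a careful count of the roots of $A'$ and $A''$), and the complementary bounds \eqref{local_minima} are left as a conjecture verified numerically for all odd $L\le 501$. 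By asserting the full band condition as if it were routine, your proof claims more than is known; a complete write-up must either supply a genuine proof of the missing inequalities or flag them as a conjecture, as the paper does.
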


\noindent
\begin{proof}
We want to implement the encoded parity operator $\mathcal{E}\Big(Z^{(BK)}_i\Big)$ whose action on qubits is given by \eqref{Z_action}.
To do this, we can use quantum signal processing~\cite{low2017signalprocessing,low2019qubitization,low2016resonantequiangular}.
First, define the Hermitian operator
\begin{equation}
\label{Hm_def}
    \mathcal{H}_i\equiv
    \cos\mathcal{G}_i\equiv\cos\left(\frac{\pi}{2}\left(\mathds{1}-\frac{1}{L}\sum_{j\in S_i}Z_j\right)\right),
\end{equation}
for
\begin{equation}
\label{Gm_def}
    \mathcal{G}_i\equiv\frac{\pi}{2}\left(\mathds{1}-\frac{1}{L}\sum_{j\in S_i}Z_j\right).
\end{equation}
By definition, $\mathcal{H}_i$ has eigenvalues
\begin{equation}
    \{\cos\left(\frac{m\pi}{L}\right)~|~m=0,1,2,...,L\}.
\end{equation}
Any computational basis state $|q\rangle$ is an eigenvector of $\mathcal{H}_i$, so if we let
\begin{equation}
    \mathcal{H}_i|q\rangle=\lambda_q|q\rangle,
\end{equation}
then by \eqref{Z_sum}, \eqref{Z_action}, and \eqref{Hm_def},
\begin{equation}
\label{Zm_in_terms_of_Hm}
    \mathcal{E}\Big(Z^{(BK)}_i\Big)|q\rangle=
    \begin{cases}
        |q\rangle\quad\text{if $\lambda_q>0$},\\
        -|q\rangle\quad\text{if $\lambda_q<0$}.
    \end{cases}
\end{equation}

Next, we define a block-encoding $W_\phi$ of $\mathcal{H}_i$ ($W_\phi$ is the \emph{phased iterate} of~\cite{low2019qubitization}):
\begin{equation}
\label{phased_iterate_def}
    W_\phi\equiv
    \begin{pmatrix}
        \mathcal{H}_i&-ie^{-i\phi}\sqrt{1-\mathcal{H}_i^2}\\
        -ie^{i\phi}\sqrt{1-\mathcal{H}_i^2}&\mathcal{H}_i
    \end{pmatrix},
\end{equation}
which acts on the codespace and one additional ancilla qubit whose states $|\cdot\rangle_a$ define the blocks in \eqref{phased_iterate_def}.
Using quantum signal processing, via $N$ queries to $W_\phi$ we can implement
\begin{equation}
\label{block_encoding}
    \begin{pmatrix}
        A(\mathcal{H}_i)&\cdot\\
        \cdot&\cdot
    \end{pmatrix}
\end{equation}
for any degree-$N$ real polynomial $A$ such that
\begin{equation}
\label{poly_constraints}
\begin{split}
    &|A(\lambda)|\le1\quad\forall\lambda\in[-1,1],\\
    &|A(\lambda)|\ge1\quad\forall\lambda\notin(-1,1),
\end{split}
\end{equation}
by~\cite[Lemma 12]{low2019qubitization} (for us, $N$ will always be odd, so the final condition in~\cite[Lemma 12]{low2019qubitization} is irrelevant).
Hence, we just want to find a polynomial $A$ that satisfies the above properties and passes through the points
\begin{equation}
\label{hermite_points}
\begin{split}
    &\{\left(\cos\left(\frac{m\pi}{L}\right),1\right)~|~m=0,1,2,...,\left\lfloor\frac{L}{2}\right\rfloor\}\\
    &\cup\{\left(\cos\left(\frac{m\pi}{L}\right),-1\right)~|~m=\left\lfloor\frac{L}{2}\right\rfloor+1,...,L\},
\end{split}
\end{equation}
since this will give
\begin{equation}
    \mathcal{E}\Big(Z^{(BK)}_i\Big)|q\rangle=A(\mathcal{H}_i)|q\rangle
\end{equation}
for any computational basis state $|q\rangle$ by \eqref{Zm_in_terms_of_Hm}, and thus for all qubit states, including the codespace.
Note that \eqref{hermite_points} finally justifies why we require $L$ to be odd: this guarantees that the numbers of points with value $+1$ and value $-1$ are the same.

\begin{figure}[t]
    \centering
    \includegraphics[width=\columnwidth]{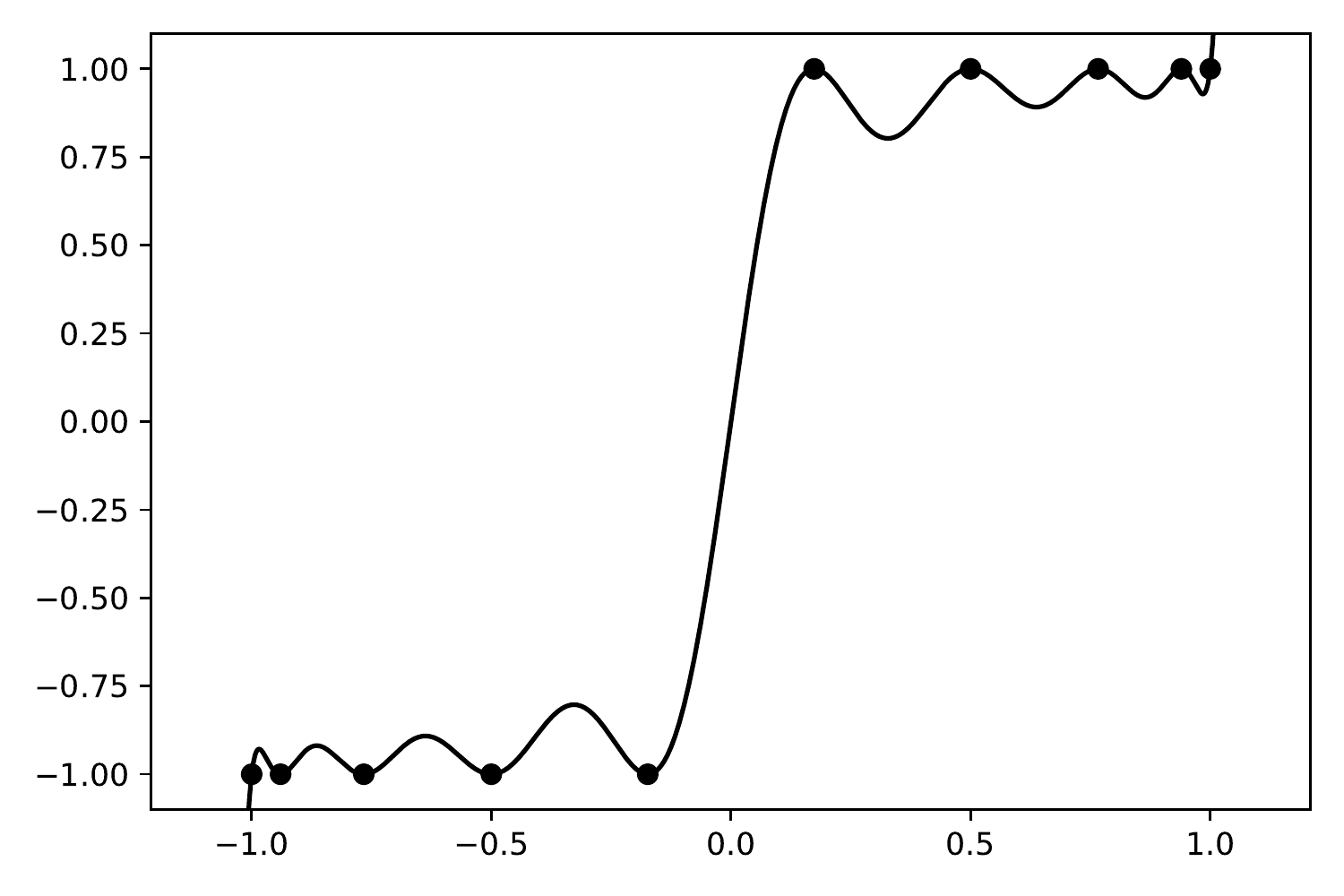}
    \caption{Example of the Hermite interpolating polynomial for $L=9$. The points are given by \eqref{hermite_points}, and all of their derivatives are set to zero except for the leftmost and rightmost points.}
    \label{hermite_example}
\end{figure}

We can find such a polynomial by Hermite interpolation of the points \eqref{hermite_points} together with the constraints that the first derivative be zero at each non-edge point (i.e., all points except for $(1,1)$ and $(-1,-1)$, the leftmost and rightmost points).
The constraints on the derivatives are necessary for the resulting polynomial to satisfy the first line in \eqref{poly_constraints}.
See \cref{hermite_example} for an example, and \cref{hermite_interpolation_app} for a review of Hermite interpolation.

We prove as \cref{interpolation_lemma}, below, that the resulting polynomial satisfies the constraints \eqref{poly_constraints} except for
\begin{equation}
\label{local_minima}
\begin{split}
    &A(\lambda)\ge-1\quad\forall\lambda\in[0,1],\\
    &A(\lambda)\le1\quad\forall\lambda\in[-1,0],
\end{split}
\end{equation}
e.g., the local minima in the right half of \cref{hermite_example} do not go below $-1$ and the local maxima in the left half of \cref{hermite_example} do not go above $1$.
Confirming these for general $L$ proved difficult because of the arbitrary degree of the polynomial.
However, we checked \eqref{local_minima} for all (odd) $L$ up to $501$, and found that for all these, the least local minimum in the $\lambda>0$ region is greater than $0.8$ and increases slightly with $L$: from $L=251$ to $L=501$, the first five digits of this least local minimum are $0.80662$, while the remaining digits climb slowly.
Correspondingly, the greatest local maximum in the $\lambda<0$ region is less than $-0.8$ and decreases slightly with $L$.
Since $L=501$ corresponds to at least $LL'\ge L^2=251001$ qubits, we consider this result to be adequate for intermediate-term applications, and we conjecture that \eqref{local_minima} holds for all (odd) $L$.

Since $\mathcal{E}\Big(Z^{(BK)}_i\Big)$ is unitary and its block-encoding via $A(\mathcal{H}_i)$ as in \eqref{block_encoding} must also be unitary, the resulting block encoding must be
\begin{equation}
    \begin{pmatrix}
        A(\mathcal{H}_i)&0\\
        0&\cdot
    \end{pmatrix}
    =
    \begin{pmatrix}
        \mathcal{E}\Big(Z^{(BK)}_i\Big)&0\\
        0&\cdot
    \end{pmatrix},
\end{equation}
i.e., there is no leakage out of the upper left block.
Hence if we start with a state
\begin{equation}
    |0\rangle_a\otimes|q\rangle,
\end{equation}
the quantum signal processing algorithm will map this exactly to
\begin{equation}
    |0\rangle_a\otimes\mathcal{E}\Big(Z^{(BK)}_i\Big)|q\rangle
\end{equation}
as desired.
Hermite interpolation of $L+1$ points and $L-1$ first derivatives results in a degree
\begin{equation}
    N=2L-1
\end{equation}
polynomial, so this algorithm requires $2L-1$ queries to $W_\phi$.

It remains to show how to implement $W_\phi$.
\cref{phased_iterate_implementation_lemma}, below, shows how this can be done using $L$ two-qubit operations, the controlled-$Z$-rotations in \eqref{Wphi_decomp_main}.
Since the algorithm requires $2L-1$ queries to $W_\phi$, the total number of two-qubit operations required to implement $Z^{(BK)}_i$ is $L(2L-1)$ as claimed in the theorem statement.
The number of additional single-qubit gates required is equal to this plus $5(2L-1)$ (for the single-qubit gates acting on the ancilla), by \eqref{Wphi_decomp_main}.
This completes the proof of \cref{parity_theorem}.
\end{proof}

As pointed out in the proof above, \cref{parity_theorem} relies on a conjecture that we have checked explicitly out to at least 251001 qubits.
It also relies on the two lemmas that follow, whose proofs we leave for \cref{proofs}:

\begin{lemma}
\label{interpolation_lemma}
    For odd $L$, the polynomial $A$ obtained by Hermite interpolation of the points \eqref{hermite_points}, together with the constraints that its first derivative be zero at all non-edge points, satisfies:
    \begin{equation}
        \label{poly_constraints_proof}
        \begin{split}
            &A(\lambda)\le1\quad\forall\lambda\in[0,1],\\
            &A(\lambda)\ge-1\quad\forall\lambda\in[-1,0],\\
            &|A(\lambda)|\ge1\quad\forall\lambda\notin(-1,1).
        \end{split}
    \end{equation}
    The $x$-coordinates of the non-edge points are
    \begin{equation}
        \{\cos\left(\frac{m\pi}{L}\right)~|~m=1,2,...,L-2,L-1\}.
    \end{equation}
\end{lemma}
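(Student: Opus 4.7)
The plan is to exploit an odd symmetry of $A$ and reduce the entire lemma to a single Hermite interpolation error estimate. First I would observe that the interpolation data in \eqref{hermite_points} together with the derivative conditions is invariant under $(\lambda,A)\mapsto(-\lambda,-A)$: the values $+1$ on the right half map to the values $-1$ on the left half, and the vanishing first derivative at each interior node is preserved, since $-\cos(m\pi/L)=\cos((L-m)\pi/L)$ and $L-m$ also indexes an interior point. By uniqueness of the Hermite interpolant among polynomials of degree $\le 2L-1$, this forces $A(-\lambda)=-A(\lambda)$, so $A$ is odd. It therefore suffices to prove $A(\lambda)\le 1$ on $[0,1]$ and $A(\lambda)\ge 1$ on $[1,\infty)$; the bounds on $[-1,0]$ and $(-\infty,-1]$ then follow by the substitution $\lambda\mapsto-\lambda$.

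Next I would write $A(\lambda)=\lambda\,q(\lambda^{2})$ for a polynomial $q$ of degree at most $L-1$ in $t=\lambda^{2}$, and show that the original Hermite conditions on $A$ are equivalent to the Hermite conditions for interpolating $f(t)=t^{-1/2}$ at the $(L+1)/2$ nodes $t_m=\cos^{2}(m\pi/L)$, $m=0,1,\ldots,(L-1)/2$. Here $t_0=1$ is a simple node (matching the unconstrained derivative at the edge point $\lambda=1$) and each $t_m$ with $m\ge 1$ is a double node (matching both $A(\cos(m\pi/L))=1$ and the vanishing derivative). Matching values gives $q(t_m)=1/\cos(m\pi/L)=t_m^{-1/2}=f(t_m)$, and differentiating $A(\lambda)=\lambda\,q(\lambda^{2})$ at $\lambda=\cos(m\pi/L)$ shows that $A'=0$ is equivalent to $q'(t_m)=-\tfrac{1}{2}t_m^{-3/2}=f'(t_m)$. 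These $L$ conditions uniquely determine the $L$ coefficients of $q$.

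I would then apply the standard Hermite interpolation error formula,
\[
f(t)-q(t)=\frac{f^{(L)}(\xi)}{L!}\,(t-1)\prod_{m=1}^{(L-1)/2}(t-t_m)^{2},
\]
valid for some $\xi$ in the convex hull of $t$ and the nodes, which always lies in $(0,\infty)$ where $f$ is analytic. A direct computation gives $f^{(L)}(t)=(-1)^{L}\frac{(2L-1)!!}{2^{L}}\,t^{-(2L+1)/2}$, so the hypothesis that $L$ is odd forces $f^{(L)}(\xi)<0$. The squared factors are nonnegative, so the sign of $f-q$ is opposite to the sign of $(t-1)$: we get $q(t)<t^{-1/2}$ for $t\in(0,1)$ and $q(t)>t^{-1/2}$ for $t>1$, with equality only at the nodes and at $t=1$. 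Multiplying by $\lambda\ge 0$ converts these into $A(\lambda)\le 1$ on $[0,1]$ and $A(\lambda)\ge 1$ on $[1,\infty)$; the odd symmetry of $A$ then supplies the remaining bounds and completes \eqref{poly_constraints_proof}.

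The only substantive step is spotting the substitution $A(\lambda)=\lambda\,q(\lambda^{2})$ and recognizing the resulting Hermite problem as interpolation of $t^{-1/2}$; everything else is bookkeeping. Once that is in hand, the whole argument pivots on the parity of $L$, since $L$ odd is precisely what makes $f^{(L)}$ strictly negative on $(0,\infty)$ and hence turns the error formula into the desired one-sided bounds. The hypothesis that $L$ be odd, which first entered for encoding-theoretic reasons in \eqref{L_min}, thus plays a second, analytic role here. Note that this approach gives no information about the missing bounds $A\ge-1$ on $[0,1]$ and $A\le 1$ on $[-1,0]$, consistent with the fact that those inequalities remain the conjectural part of \cref{parity_theorem}.
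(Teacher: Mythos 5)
Your proof is correct, and it takes a genuinely different route from the paper's. The paper argues qualitatively: it counts the $2L-2$ possible extrema of the degree-$(2L-1)$ polynomial $A$, locates $L-1$ of them at the interior nodes and the other $L-1$ between adjacent equal-valued nodes, and then tracks signs of $A'$ and $A''$ outward from the middle pair to conclude that $A$ approaches each interpolation point from the correct side. You instead observe that the data \eqref{hermite_points} is antisymmetric under $(\lambda,A)\mapsto(-\lambda,-A)$, so by uniqueness $A$ is odd and can be written as $A(\lambda)=\lambda\,q(\lambda^2)$; the original confluent interpolation problem then becomes Hermite interpolation of $f(t)=t^{-1/2}$ at the nodes $t_m=\cos^2(m\pi/L)$ (simple at $t_0=1$, double elsewhere), and the one-sided bounds follow from the remainder formula because $f^{(L)}$ has a fixed sign on $(0,\infty)$ when $L$ is odd. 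I checked the reduction: the value and derivative matching ($q(t_m)=t_m^{-1/2}$, $q'(t_m)=-\tfrac12 t_m^{-3/2}$), the degree count ($L$ conditions for $\deg q\le L-1$, product of degree $L$ in the remainder), the positivity of the convex hull of $\{t\}\cup\{t_m\}$ so that $\xi>0$, and the sign bookkeeping are all right, and multiplying by $\lambda\ge0$ (with $\lambda=0$ handled by oddness) recovers exactly the three lines of \eqref{poly_constraints_proof}. Your approach buys a cleaner and arguably more airtight argument—the paper's extremum-counting has a few steps that are asserted rather than fully justified—and it makes explicit that the odd parity of $L$ does double duty (forcing the antisymmetry of the data and fixing the sign of $f^{(L)}$). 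The paper's argument, in exchange, is elementary and adapts more directly to the non-symmetric variant used for the multiply-controlled phase in the conclusion, where the data is not odd. You are also right that neither argument reaches the conjectural bounds \eqref{local_minima}; those remain the unproven ingredient of \cref{parity_theorem}.
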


\begin{lemma}
\label{phased_iterate_implementation_lemma}
    For
    \begin{equation}
    \label{R_def_main}
        R_\phi\equiv
        \begin{pmatrix}
            1&0\\
            0&e^{i\phi}
        \end{pmatrix}_a
        \otimes\mathds{1},
    \end{equation}
    $W_\phi$ is given by the following sequence of operations:
    \begin{equation}
    \label{Wphi_decomp_main}
    \begin{split}
        W_{\phi}=&R_\phi(H\otimes\mathds{1})\\
        &\cdot R_\pi\left(\prod_{j\in S_i}\text{ctrl-}e^{-\frac{i\pi Z_j}{L}}\right)\left(\prod_{j\in S_i}e^{-\frac{i\pi Z_j}{2L}}\right)\\
        &\cdot(H\otimes\mathds{1})R_\phi^\dagger,
    \end{split}
    \end{equation}
    where $H$ is the Hadamard gate, $Z_j$ is a single-qubit Pauli-$Z$ acting on code qubit $j$, and the controls are on the ancilla qubit.
\end{lemma}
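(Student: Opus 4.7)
The plan is to verify \eqref{Wphi_decomp_main} by direct computation, organized around the factorization $W_\phi = R_\phi W_0 R_\phi^\dagger$, where $W_0$ is the $\phi=0$ phased iterate. Because \eqref{Wphi_decomp_main} wraps its interior in exactly this conjugation, the task reduces to showing that $(H\otimes\mathds{1})\cdot R_\pi\bigl(\prod_j\text{ctrl-}e^{-i\pi Z_j/L}\bigr)\bigl(\prod_j e^{-i\pi Z_j/(2L)}\bigr)\cdot(H\otimes\mathds{1})$ equals $W_0$. It is convenient to identify $W_0$ with the operator $e^{-i\mathcal{G}_i X_a}$, since on each eigenspace of $\mathcal{G}_i$ with eigenvalue $g\in[0,\pi]$ this evaluates to $\cos g\,\mathds{1}_a - i\sin g\,X_a$, matching the definition of $W_0$ via $\mathcal{H}_i = \cos\mathcal{G}_i$ and $\sqrt{1-\mathcal{H}_i^2} = \sin\mathcal{G}_i$, where the last identity uses $\sin g \ge 0$ on $[0,\pi]$.

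I would execute the evaluation in three steps. First, use the commutativity of $\{Z_j\}_{j\in S_i}$ to collapse the uncontrolled product into $e^{-i\pi T/(2L)}$ and the controlled product into $|0\rangle\langle 0|_a\otimes\mathds{1} + |1\rangle\langle 1|_a\otimes e^{-i\pi T/L}$, where $T\equiv\sum_{j\in S_i}Z_j$; then $R_\pi$ contributes a $-1$ on the $|1\rangle_a$ block. Second, apply the identity $\pi T/(2L) = \tfrac{\pi}{2}\mathds{1} - \mathcal{G}_i$, immediate from \eqref{Gm_def}, to rewrite the two code-qubit blocks in terms of $e^{i\mathcal{G}_i}$ and $e^{3i\mathcal{G}_i}$, and factor out a common prefactor so that the whole operator takes the form $e^{-i\mathcal{G}_i Z_a}$. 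Third, conjugate by $H\otimes\mathds{1}$: since $HZH = X$, this yields $e^{-i\mathcal{G}_i X_a} = W_0$. The outer $R_\phi$-conjugation in \eqref{Wphi_decomp_main} then rotates $X_a\to\cos\phi\,X_a + \sin\phi\,Y_a$, decorating the off-diagonal entries of $W_0$ with the required $e^{\mp i\phi}$ factors to give $W_\phi$.

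The main obstacle will be the careful bookkeeping of phases in the second step. Substituting $\pi T/(2L) = \pi/2 - \mathcal{G}_i$ turns $e^{-i\pi T/(2L)}$ into $-ie^{i\mathcal{G}_i}$ and its cube into $ie^{3i\mathcal{G}_i}$; the $-1$ contributed by $R_\pi$ must combine with these factors precisely so that, after pulling out a common prefactor, the two ancilla blocks collapse to a diagonal with entries $e^{-i\mathcal{G}_i}$ and $e^{i\mathcal{G}_i}$, i.e., $e^{-i\mathcal{G}_i Z_a}$. This is the single place where a sign or factor-of-$i$ slip would propagate as a code-qubit-dependent phase error rather than a harmless scalar. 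Once this telescoping is verified, what remains is routine matrix algebra on $2\times 2$ matrices with commuting operator entries.
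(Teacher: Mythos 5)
Your overall strategy is the same as the paper's: reduce to $W_0$ via the $R_\phi$-conjugation, show that the inner factor is $e^{-i\mathcal{G}_i Z_a}$ up to phase, and conjugate by $H\otimes\mathds{1}$ to turn $Z_a$ into $X_a$. Your supporting observations are correct and in fact slightly more explicit than the paper's proof: the eigenvalues of $\mathcal{G}_i$ are $\pi w/L\in[0,\pi]$ (where $w$ is the number of qubits in $S_i$ equal to $1$), so $\sqrt{1-\mathcal{H}_i^2}=\sin\mathcal{G}_i$ holds without a sign ambiguity, and $R_\phi$-conjugation sends $X_a\mapsto\cos\phi\,X_a+\sin\phi\,Y_a$, which reproduces the off-diagonal phases in \eqref{phased_iterate_def}.

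The gap is in your second step, and it sits exactly at the spot you flag as delicate. Carrying out your own substitution, the two ancilla blocks of the middle factor are $e^{-i\pi T/(2L)}=-ie^{i\mathcal{G}_i}$ and $-e^{-3i\pi T/(2L)}=-ie^{3i\mathcal{G}_i}$, so the ``common prefactor'' you must pull out to reach $\operatorname{diag}\bigl(e^{-i\mathcal{G}_i},e^{i\mathcal{G}_i}\bigr)$ is $-ie^{2i\mathcal{G}_i}=ie^{-i\pi T/L}$. This is an operator on the code qubits, not a scalar: on the eigenspace with $w$ qubits of $S_i$ equal to $1$ it contributes the $w$-dependent phase $-ie^{2\pi iw/L}$. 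The telescoping you assert therefore does not occur, and the circuit as written implements $-ie^{2i\mathcal{G}_i}W_\phi$ rather than $W_\phi$; accumulated over the $2L-1$ queries of the signal-processing sequence this is not a harmless global phase. The resolution is that the correct operator identity is $e^{-i\mathcal{G}_i}=-i\prod_{j\in S_i}e^{+i\pi Z_j/(2L)}$, i.e., the uncontrolled rotations should carry the opposite sign $e^{+i\pi Z_j/(2L)}$; with that sign the middle factor becomes $i\operatorname{diag}\bigl(e^{-i\mathcal{G}_i},e^{i\mathcal{G}_i}\bigr)=i\,e^{-i\mathcal{G}_iZ_a}$ and the leftover $i$ is a genuine global phase. (The paper's own proof makes the corresponding slip, substituting $e^{-i\mathcal{G}_i}=i\prod_{j}e^{-i\pi Z_j/(2L)}$, whose right-hand side actually equals $e^{+i\mathcal{G}_i}$; so the obstruction you ran into is pointing at a sign error in \eqref{Wphi_decomp_main} itself rather than at a flaw unique to your route.) A complete proof must either carry the corrected sign through or verify explicitly that the residual factor is scalar --- asserting the cancellation without checking it is precisely the ``factor-of-$i$ slip'' you warned against.
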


It follows from \cref{parity_theorem} that the cost in two-qubit gates of implementing an encoded conjugate pair of fermionic operators is $O(\log M)$ times \eqref{parity_cost}:
\begin{equation}
\label{term_cost}
    O\left(D^2F^2\log^3 M\right)~\xrightarrow{M\gg F\text{ limit}}~O\left(F^2\log^5 M\right),
\end{equation}
where the $M\gg F$ limit is obtained by substituting $D=O(\log M)$, per \cref{performance_theorem}.
Since every term in the Hamiltonian is a product of up to four such conjugate pairs, the cost of implementing the encoding of a term in the Hamiltonian as a unitary also scales as \eqref{term_cost}.

To implement VQE we need to construct encoded fermion number preserving ansatz circuits, and to implement many simulation methods including the randomized methods of~\cite{campbell2019qdrift,wan2021phaseestimation} we require encodings of rotations generated by terms in the Hamiltonian.
In \cref{proofs}, we prove the following corollary to \cref{parity_theorem}:
\begin{corollary}
\label{ansatz_implementation_lemma}
    We can implement an encoded \emph{hop gate}, which is universal for real-valued wavefunctions with fixed fermion number~\cite{eddins2021entanglementforging}, or the encoding of a rotation generated by any term in the Hamiltonian, using one ancilla qubit and a number of doubly-controlled (three-qubit) gates given by \eqref{term_cost}.
\end{corollary}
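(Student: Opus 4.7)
The plan is to build each encoded rotation $e^{-i\theta\mathcal{E}(P)}$ via a standard phase-kickback circuit with one ancilla $a$, realizing the required controlled-$\mathcal{E}(P)$ by adding $a$ as an extra control to every gate in the \cref{parity_theorem} construction, so that each two-qubit controlled-phase is promoted to a three-qubit doubly-controlled phase.

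First I would reduce to the single BK-Pauli case. After the Bravyi-Kitaev transformation each conjugate pair \eqref{fermionic_conj_pairs} is a Pauli string of weight $O(\log M)$, so a Hamiltonian term (a product of at most four such pairs) encodes to $\mathcal{E}(P)$ for a BK-Pauli $P$ of weight $O(\log M)$, built from the encoded $X^{(BK)}$ factors \eqref{X_mapping} and the majority-vote operators \eqref{Z_action}. On the codespace $\mathcal{E}(P)$ is a Hermitian unitary involution. The hop-gate generator is a sum of $O(1)$ such BK-Paulis, so again it suffices to handle a single BK-Pauli.

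Second, for any Hermitian unitary involution $U$ the phase-kickback identity
\begin{equation*}
|0\rangle_a\otimes e^{-i\theta U}|\psi\rangle=(H_a)\,(\text{ctrl-}U)\,(e^{-i\theta X_a})\,(\text{ctrl-}U)\,(H_a)\;|0\rangle_a\otimes|\psi\rangle
\end{equation*}
holds by direct computation in the $U$-eigenbasis: $\text{ctrl-}U$ sends $|+\rangle_a|\psi_\pm\rangle\mapsto|\pm\rangle_a|\psi_\pm\rangle$, the rotation $e^{-i\theta X_a}$ then applies the phase $e^{\mp i\theta}$, and the second $\text{ctrl-}U$ uncomputes via $U^2=\mathds{1}$, returning $a$ to $|0\rangle$. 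It remains to implement $\text{ctrl-}\mathcal{E}(P)$ with one extra control qubit $a$; I would do this by replacing each two-qubit controlled-phase of \eqref{Wphi_decomp_main} by its three-qubit doubly-controlled analog, each single-qubit $R_\phi$ on the QSP ancilla by a singly-controlled $R_\phi$, and each encoded $X^{(BK)}$ factor \eqref{X_mapping} by a singly-controlled product of $X$'s. The QSP ancilla used inside \cref{parity_theorem} is deterministically reset to $|0\rangle$ after each $\mathcal{E}(Z^{(BK)}_i)$ block, hence it can be reused across the $O(\log M)$ BK factors of $P$ and across both $\text{ctrl-}\mathcal{E}(P)$ calls, leaving $a$ as the only persistent ancilla.

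Multiplying the per-factor cost $L(2L-1)=O(D^2F^2\log^2M)$ from \cref{parity_theorem} by the $O(\log M)$ BK factors per term gives $O(D^2F^2\log^3M)$ doubly-controlled gates, matching \eqref{term_cost}. The main obstacle is verifying that adding the $a$-control to each gate of \eqref{Wphi_decomp_main} yields a clean $\text{ctrl-}\mathcal{E}(Z^{(BK)}_i)$, that is, the identity on both the code qubits and the QSP ancilla when $a=0$. This works out because \eqref{Wphi_decomp_main} consists entirely of two-qubit controlled-$Z$ rotations, single-qubit $R_\phi$ phases on the QSP ancilla, and Hadamards: disabling the $a$-controls collapses each controlled rotation to identity, the (now uncontrolled) Hadamards on the QSP ancilla pair up and cancel, and the $a=1$ branch reproduces the original \cref{parity_theorem} circuit exactly, completing the proof.
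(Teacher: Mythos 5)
Your proposal is correct and takes essentially the same route as the paper: a one-ancilla phase-kickback circuit with two controlled applications of the encoded Hermitian involution, where the control is realized by promoting each gate of the \cref{parity_theorem} construction to a doubly-controlled gate, giving the cost \eqref{term_cost}. Your kickback identity $H_a(\text{ctrl-}U)e^{-i\theta X_a}(\text{ctrl-}U)H_a$ is the same circuit as the paper's \eqref{big_step_one}--\eqref{big_step_two} up to conjugating the middle ancilla phase by Hadamards, and your explicit check that the $a=0$ branch of the controlled \eqref{Wphi_decomp_main} collapses to the identity is a detail the paper leaves implicit. The only under-justified step is the hop gate: its generator is a \emph{sum} of Pauli strings, and reducing to single-Pauli rotations requires that the summands commute so the exponential factorizes (as in the paper's \eqref{phased_swap}, where $X^{(f)}Y^{(f)}$ and $Y^{(f)}X^{(f)}$ commute and the residual controlled phase is diagonal); you should state this, and also note, as the paper does, that the intermediate factors can transiently raise the fermion number, which is why the encoding must be valid up to $F+4$ fermions.
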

\noindent
Being able to construct an encoded hop gate means that in principle we can implement any desired fermion number preserving ansatz circuit, augmenting with single-fermion phases if complex-valued wavefunctions are desired.
The hop gate in practice is best suited for constructing so-called ``hardware-efficient" ansatze~
\cite{eddins2021entanglementforging}.
In terms of ansatze, encoded rotations generated by terms in the Hamiltonian are aimed more specifically at implementing the unitary coupled-cluster ansatz~\cite{romero18a}.
However, in either case the main challenge for executing these encoded gates in practice is coherently implementing sequences of doubly-controlled phases and single-qubit gates.
Since circuit depths on existing quantum computers are severely limited by noise, our method is mainly targeted at future devices where this noise is reduced through hardware improvements and mitigation, or where error-correction is possible.

Furthermore, \cref{parity_theorem} assumes arbitrary connectivity.
A fixed qubit architecture requires additional qubit swaps.
In \cref{proofs}, we prove as \cref{fixed_lattice_lemma} that the number of swaps required to implement $\mathcal{E}\Big(Z^{(BK)}_i\Big)$ on a linear qubit architecture is $O(QL)$, which is still polylogarithmic in $M$.
This therefore also holds on any architecture (such as planar architecture) that includes linear connectivity as a subgraph.

Finally, although the Hamiltonian \eqref{elec_struct} conserves fermion number, its individual unitary terms after transforming it into a linear combination of Majorana operators may increase the number of fermions by at most four, since they are products of at most four Majorana operators \eqref{fermionic_conj_pairs}.
The contributions from different terms to states with extra fermions must cancel out so that the whole Hamiltonian does conserve fermion number, but in order for them to cancel in the encoded Hamiltonian we must ensure that states of up to $F+4$ fermions are correctly encoded.
Hence, $F$ should be replaced by $F+4$ in all of our costs for both qubits and gates, but this only changes the scalings at subleading order.

\section{Conclusion}
\label{conclusion}

In this paper, we presented the first second-quantized fermion-to-qubit mapping that uses polylogarithmically-many qubits and gates in the number of fermionic modes, to simulate fermionic creation and annihilation operators.
This is an exponential improvement in the dependence on number of modes compared to prior second-quantized encodings, for either qubits, operations, or both.
Polylogarithmic dependence on the number of modes will permit simulation of molecules as well as many-body problems such as the Hubbard model in the high-accuracy limit of large bases.

The method of using quantum signal processing to exactly implement the encoded parity operation as in \cref{parity_theorem} may have utility beyond the scope of this paper.
It permits implementing a $\pm1$ phase controlled on the Hamming weight of a set of $n$ qubits, provided the corresponding Hermite interpolating polynomial satisfies \eqref{poly_constraints}.
For example, it could be used to implement an $n$-qubit controlled phase, by controlling the $-1$ phase on the Hamming weight of the qubits being $n$.
Just as in \cref{parity_theorem}, the cost of implementing this via Hermite interpolation and quantum signal processing is $n(2n-1)$ singly-controlled phases plus $O(n)$ single-qubit gates.
The proof of \cref{interpolation_lemma} can be adapted to show that the corresponding interpolation polynomial satisfies \eqref{poly_constraints} up to a conjecture similar to that involved in \cref{parity_theorem}, using the same phased iterate as in \cref{parity_theorem}.
We checked the conjecture out to $n=397$ and found that the least local minimum (which is always the leftmost minimum in this case) has a value slightly larger than $0.9$ that increases slowly with $n$.
An example of the polynomial for $n=9$ is given in \cref{hermite_ctrl_phase}.

\begin{figure}[t]
    \centering
    \includegraphics[width=\columnwidth]{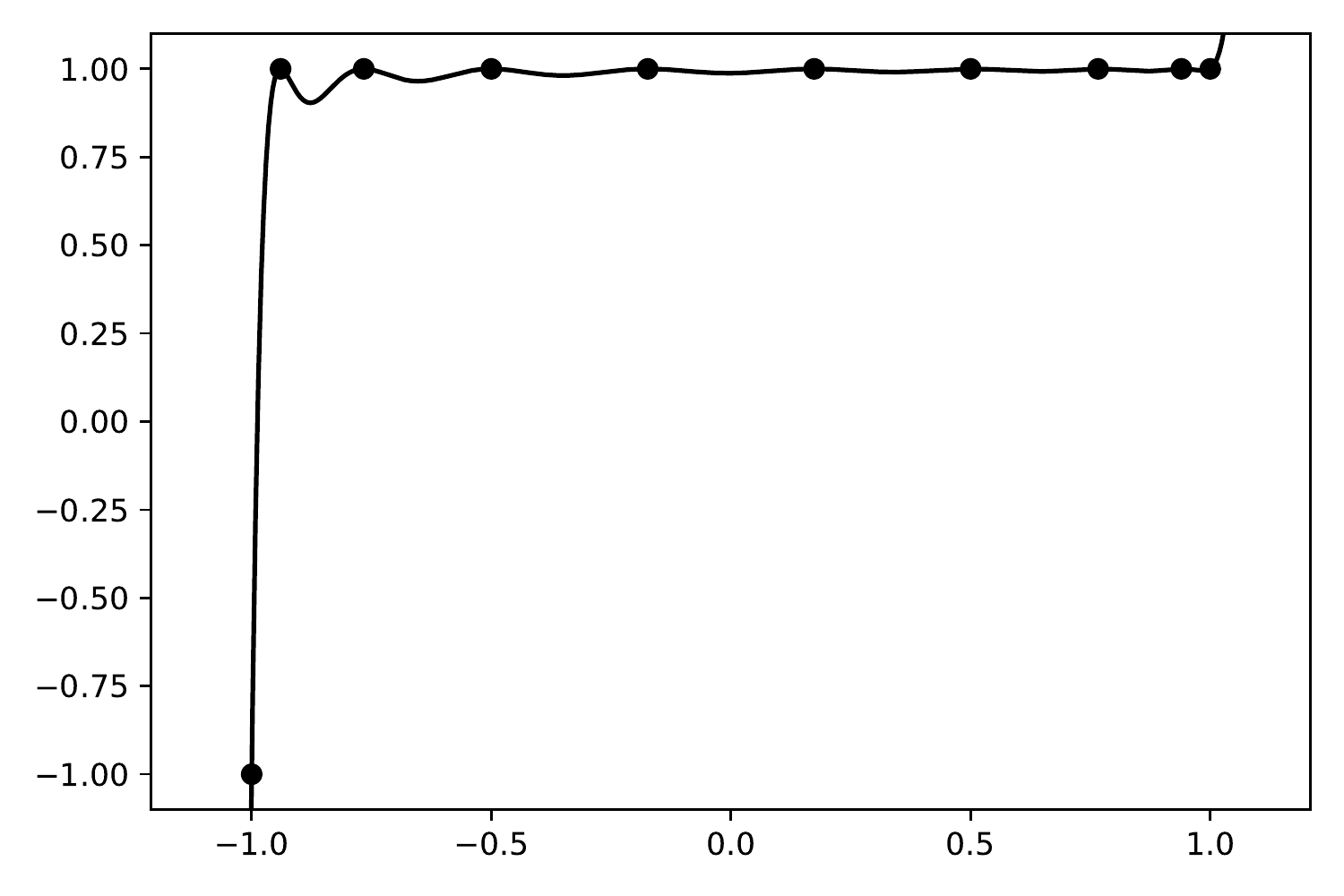}
    \caption{Example of the Hermite interpolating polynomial for an $n$-qubit controlled phase gate for $n=9$. As in \cref{encoding_operations}, $x$-coordinates of points are eigenvalues of $\cos\left(\frac{\pi}{2}\left(\mathds{1}-\frac{1}{n}\sum_{j}Z_j\right)\right)$ while $y$-coordinates are the phases controlled on them, so the leftmost point corresponds to all qubits being $1$, in which case the $-1$ phase is applied.}
    \label{hermite_ctrl_phase}
\end{figure}

Conjugating this multiply-controlled phase by single-qubit Hadamards acting on any one qubit creates a multiply-controlled NOT acting on that qubit and controlled on the others.
This provides an alternative construction of a multiply-controlled NOT to the usual method using $O(n)$ ancilla qubits and a cascade of Toffoli gates~\cite{nielsen01}.
Our method instead uses only one ancilla qubit and is compiled directly into singly-controlled gates (instead of Toffoli gates), at the expense of requiring $O(n^2)$ gates rather than $O(n)$.
Hence, this construction represents another space-versus-time tradeoff in the same vein as the main topic of this paper, only with a different application.

~
\begin{acknowledgements}
The authors thank Daniel Miller, Mark Steudtner, Mario Motta, and Peter Love for helpful conversations. W.~M.~K. acknowledges support from the National Science Foundation, Grant No. DGE-1842474.
\end{acknowledgements}

\bibliographystyle{ieeetr}
\bibliography{references}

\appendix

\section{Proofs}
\label[appendix]{proofs}

Lemmas whose numbers are preceded by `A' are referenced but not stated in the main text, while all other results are stated in the main text.
The results appear in the order in which they are referenced or stated in the main text.

\begin{lemma}
\label{overlap_lemma}
    Let $\mathcal{B}$ be a set of length-$Q$, Hamming-weight-$L$ bitstrings whose overlaps are upper-bounded by $D$, i.e., $\alpha\cdot\beta\le D$ for any distinct $\alpha,\beta\in\mathcal{B}$.
    Let $\gamma$ be a bitwise sum of $n\le G$ elements in $\mathcal{B}$, for any $G$ such that
    \begin{equation}
        L>2DG.
    \end{equation}
    Then both of the following are true:
    \begin{enumerate}
        \item $\alpha$ appears in the sum that defines $\gamma$ if and only if
        \begin{equation}
            \alpha\cdot\gamma=\sum_{i=0}^{Q-1}\alpha_i\gamma_i>\frac{L}{2}.
        \end{equation}
        \item all sums of up to $G$ elements in $\mathcal{B}$ are unique.
    \end{enumerate}
\end{lemma}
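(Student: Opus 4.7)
The plan is to exploit the definition of bitwise XOR coordinate-by-coordinate, and to compare $\alpha \cdot \gamma$ against the threshold $L/2$ from above or below depending on whether or not $\alpha$ appears among the summands defining $\gamma$. Writing $\gamma = \alpha^{(1)} \oplus \cdots \oplus \alpha^{(n)}$ with the $\alpha^{(j)} \in \mathcal{B}$ distinct and $n \le G$, the key observation is that $\gamma_i = 1$ if and only if an odd (hence nonzero) number of the $\alpha^{(j)}$ have a $1$ at position $i$. I will translate both directions of statement 1 into statements about the pairwise overlaps $\alpha \cdot \alpha^{(j)}$, each of which is bounded by $D$ by hypothesis.

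The easy direction is when $\alpha$ is not among the summands. There, every coordinate $i$ with $\alpha_i \gamma_i = 1$ forces at least one $j$ with $\alpha^{(j)}_i = 1$, giving the crude upper bound $\alpha \cdot \gamma \le \sum_{j=1}^{n} \alpha \cdot \alpha^{(j)} \le nD \le GD < L/2$, where the last step uses $L > 2DG$. Hence $\alpha \cdot \gamma > L/2$ is only possible if $\alpha$ is a summand.

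For the converse, the step I expect to take the most care, I would assume $\alpha = \alpha^{(1)}$ and, for each coordinate $i$ with $\alpha_i = 1$, introduce the count $k_i$ of indices $j \ge 2$ with $\alpha^{(j)}_i = 1$. Then $\gamma_i = 1$ exactly when $k_i$ is even, and in particular whenever $k_i = 0$. The plan is to lower-bound $\alpha \cdot \gamma$ by the number of $1$-positions of $\alpha$ untouched by any other $\alpha^{(j)}$, and then to control the number of touched positions via $\sum_{i : \alpha_i = 1} k_i = \sum_{j \ge 2} \alpha \cdot \alpha^{(j)} \le (n-1)D$. This yields $\alpha \cdot \gamma \ge L - (n-1)D > L/2$, again by $L > 2DG$ and $n \le G$. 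The main obstacle is simply to make the strict inequality come out cleanly: one has to check that $L - (G-1)D$ exceeds $L/2$, which is exactly the content of $L > 2DG$, and that the bound is insensitive to the actual parities of the $k_i$ (it is, since throwing away the even-but-nonzero $k_i$ only loses in our favor).

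Statement 2 will then drop out immediately: given $\gamma$, statement 1 lets us recover its set of summands as $\{\alpha \in \mathcal{B} : \alpha \cdot \gamma > L/2\}$, independently of how $\gamma$ was expressed, so any two representations of $\gamma$ as a bitwise sum of at most $G$ distinct elements of $\mathcal{B}$ must use the same elements and hence be identical.
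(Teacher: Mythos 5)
Your proposal is correct and follows essentially the same route as the paper's proof: the same two bounds $\alpha\cdot\gamma\le nD\le DG<L/2$ when $\alpha$ is absent and $\alpha\cdot\gamma\ge L-(n-1)D>L/2$ when $\alpha$ is present, with claim 2 deduced from claim 1 exactly as the paper does. The only difference is cosmetic: your coordinate-wise counts $k_i$ spell out explicitly what the paper compresses into an appeal to the triangle and reverse triangle inequalities.
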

\noindent
Note: in \cref{overlap_lemma}, $\mathcal{B}$ is the set of elementary codewords, i.e., the image of $\mathcal{E}$ acting on the set of Hamming-weight-one bitstrings (elementary bitstrings); see~\cref{encoding_states} for definitions.
\begin{proof}
Let $\alpha,\beta^{(1)},...,\beta^{(n)}$ be distinct elements of $\mathcal{B}$ for $n\le G$ .
By assumption, $\alpha\cdot\beta^{(j)}\le D$ for all $j$.
Hence if
\begin{equation}
    \gamma=\beta^{(1)}\oplus\cdots\oplus\beta^{(n)}
\end{equation}
where $\oplus$ denotes bitwise sum, then
\begin{equation}
    \alpha\cdot \gamma=\alpha\cdot(\beta^{(1)}\oplus\cdots\oplus\beta^{(n)})\le\sum_{j=1}^n\alpha\cdot\beta^{(j)}\le Dn\le DG
\end{equation}
by the triangle inequality.
In other words, when $\alpha$ is not included in the sum that defines $\gamma$,
\begin{equation}
    \alpha\cdot\gamma\le DG<L/2.
\end{equation}

However, if
\begin{equation}
    \gamma=\alpha\oplus\beta^{(1)}\oplus\cdots\oplus \beta^{(n-1)},
\end{equation}
then
\begin{equation}
\begin{split}
    \alpha\cdot\gamma&=\alpha\cdot(\alpha\oplus\beta^{(1)}\oplus\cdots\oplus \beta^{(n-1)})\\
    &\ge \alpha\cdot\alpha-\sum_{j=1}^{n-1}\alpha\cdot\beta^{(j)}\\
    &\ge L-(n-1)D\\
    &\ge L-(G-1)D
\end{split}
\end{equation}
by the reverse triangle inequality.
In other words, when $\alpha$ is included in the sum that defines $\gamma$,
\begin{equation}
    \alpha\cdot\gamma\ge L-(G-1)D>L/2.
\end{equation}
This completes the proof of claim 1 in the lemma statement.

Claim 2 in the lemma statement follows from this because claim 1 provides a method for determining whether $\alpha$ is in a sum of up to $G$ elements of $\mathcal{B}$, for each $\alpha\in\mathcal{B}$.
Hence, given the sum, we can identify which elements of $\mathcal{B}$ formed it, which would be impossible if not all such sums were unique.

\end{proof}

\noindent
\textbf{Theorem~\ref{performance_theorem}.}
\emph{
    In the $M\gg F$ limit, our code satisfies
    \begin{equation}
        Q=O\left(F^2\log^4 M\right),
    \end{equation}
    with $D$ satisfying $D=O(\log M)$.
}
\begin{proof}

For given $D$, $M$, and $L=2DF\lceil\log_2(M+1)\rceil+1$, we encode the $M$ modes in $L'L$ qubits, where $L'$ is the least prime such that
\begin{equation}
    (L')^{D+1}\ge M\quad\text{and}\quad L'\ge L.
\end{equation}
Hence, for fixed $D$ the total number of qubits required in the large-$M$ limit is \eqref{qubits_exact} in the main text, which we reproduce here for convenience:
\begin{equation}
\label{qubits_exact_app}
    Q=M^\frac{1}{D+1}(2DF\lceil\log_2(M+1)\rceil+1)+O(F\log^2M),
\end{equation}
for which the corresponding $L'$ is
\begin{equation}
\label{Lprime_max}
    L'=M^\frac{1}{D+1}+O\left(\frac{\log M}{D+1}\right),
\end{equation}
since $L'$ is the least prime greater than or equal to $M^\frac{1}{D+1}$, and by the prime number theorem, the least prime greater than $M^\frac{1}{D+1}$ exceeds it by $O\left(\log\left(M^\frac{1}{D+1}\right)\right)$ on average.

However, in order to find the optimal value of $D$, we want to allow $D$ to be a function of $M$, in which case the constraint $L'\ge L$ means that we should modify \eqref{Lprime_max} to
\begin{equation}
\label{Lprime_max_mod}
    L'\approx\max\left\{M^\frac{1}{D+1},2DF\lceil\log_2M\rceil+1\right\},
\end{equation}
and \eqref{qubits_exact_app} correspondingly becomes
\begin{equation}
\label{qubits_exact_mod}
\begin{split}
    Q\approx&\max\left\{M^\frac{1}{D+1},2DF\lceil\log_2M\rceil+1\right\}\\
    &~\cdot(2DF\lceil\log_2M\rceil),
\end{split}
\end{equation}
where $\approx$ indicates that subleading terms are suppressed.
Therefore, in the large-$M$ limit the best choice of $D$ is whatever value minimizes \eqref{qubits_exact_mod}.
Equivalently, we want to minimize
\begin{equation}
\label{cost_function_D}
    \max\left\{DM^\frac{1}{D+1},2D^2F\lceil\log_2 M\rceil+1\right\}
\end{equation}
over $D$.

First, take a derivative of $DM^\frac{1}{D+1}$ (the first argument of the max above) with respect to $D$, set equal to zero, and solve, which results in
\begin{equation}
\label{minimizing_D}
    D=D^*\equiv\frac{\log(M)-2\pm\sqrt{\log^2(M)-4\log(M)}}{2}.
\end{equation}
Note that when we use $\log$ without an explicit base, we mean natural logarithm.
Both solutions are positive, and one can verify that the smaller value of $D^*$ corresponds to a local maximum of $DM^\frac{1}{D+1}$ and the larger corresponds to a local minimum.
Hence, $DM^\frac{1}{D+1}$ decreases monotonically between the two values of $D^*$ given in \eqref{minimizing_D}.

However, it turns out that the minimum of $DM^\frac{1}{D+1}$ (at the larger value of $D^*$) is smaller than the second argument of the max in \eqref{cost_function_D} evaluated at the same point.
To see this, note that for $D^{**}$ defined by
\begin{equation}
    D^{**}=\frac{1}{2}\log(M)-1,
\end{equation}
$D^{**}$ is smaller than the larger value of $D^*$ in \eqref{minimizing_D}.
Evaluating each argument of the max in \eqref{cost_function_D} at $D^{**}$ yields
\begin{equation}
\begin{split}
    &D^{**}M^\frac{1}{D^{**}+1}=D^{**}e^2=\Theta(\log M),\\
    &2(D^{**})^2F\lceil\log_2 M\rceil+1=\Theta(F\log^3 M),
\end{split}
\end{equation}
so since $2D^2F\lceil\log_2 M\rceil+1$ grows with $D$ and $DM^\frac{1}{D+1}$ decreases between $D^{**}$ and its actual minimum at $D^*>D^{**}$, the two arguments of the max in \eqref{cost_function_D} cross between the two values of $D^*$.
Therefore, the minimum of \eqref{cost_function_D} is the point where the two arguments of the max in \eqref{cost_function_D} are equal, which gives
\begin{equation}
\label{min_D_condition}
    M^\frac{1}{D+1}=2DF\lceil\log_2 M\rceil+1.
\end{equation}

At this point, \eqref{qubits_exact_app} becomes
\begin{equation}
\label{qubits_penultimate}
    Q=\left(2DF\lceil\log_2 M\rceil+1\right)^2+O(F\log^2 M).
\end{equation}
We can evaluate $D$ by taking the log of \eqref{min_D_condition} and rearranging to obtain
\begin{equation}
    D=\frac{\log M}{\log\left(2DF\lceil\log_2 M\rceil+1\right)}-1.
\end{equation}
We could apply this formula recursively to obtain arbitrarily good approximations of $D$, but for the purpose of this proof, we instead simply observe that it is upper bounded by
\begin{equation}
    D\le\log M,
\end{equation}
which when inserted in \eqref{qubits_penultimate} yields our final expression,
\begin{equation}
\begin{split}
    Q&=\left(2F\lceil\log_2 M\rceil\log M+1\right)^2+O(F\log^2 M)\\
    &=O\left(F^2\log^4M\right).
\end{split}
\end{equation}

\end{proof}

\noindent
\textbf{Lemma~\ref{interpolation_lemma}.}
\emph{
    For odd $L$, the polynomial $A$ obtained by Hermite interpolation of the points \eqref{hermite_points}, together with the constraints that its first derivative be zero at all non-edge points, satisfies:
    \begin{equation}
        \label{poly_constraints_proof_app}
        \begin{split}
            &A(\lambda)\le1\quad\forall\lambda\in[0,1],\\
            &A(\lambda)\ge-1\quad\forall\lambda\in[-1,0],\\
            &|A(\lambda)|\ge1\quad\forall\lambda\notin(-1,1).
        \end{split}
    \end{equation}
    The $x$-coordinates of the non-edge points are
    \begin{equation}
        \{\cos\left(\frac{m\pi}{L}\right)~|~m=1,2,...,L-2,L-1\}.
    \end{equation}
    Note: an example of the Hermite interpolating polynomial for $L=9$ is given in \cref{hermite_example}
}

\begin{proof}

By definition, $A$ is the least-degree polynomial that satisfies the given constraints, which implies that it has degree $2L-1$ because there are $2L$ constraints.
Hence, its derivative $A'$ has degree $2L-2$, so $A$ has at most $2L-2$ local extrema.
By construction, one extremum is located at each non-edge point, of which there are $L-1$.
The remaining $L-1$ extrema must therefore be located between all pairs of adjacent points for which the values of the polynomial are the same (i.e., all adjacent pairs except for the middle pair), since there are $L-1$ such pairs.
This follows because for any pair of adjacent points, the polynomial has zero slope at at least one of the points, and it cannot be a straight line between the points, so in order to pass through the other point it must have an extremum between the points.
For a visual aid, see \cref{hermite_example}.
Therefore, all extrema of the polynomial are either located at interpolated points, or between interpolated points with the same values.

Hence, there is no extremum between the middle pair of points
\begin{equation}
    (\cos\left(\frac{\pi}{L}\left\lceil\frac{L}{2}\right\rceil\right),-1)\quad\text{and}\quad(\cos\left(\frac{\pi}{L}\left\lfloor\frac{L}{2}\right\rfloor\right),1),
\end{equation}
where the values switch from negative to positive.
This implies that the slope of the polynomial must be positive between these points, i.e., it must approach the point $p^{+}=(\cos\left(\frac{\pi}{L}\left\lfloor\frac{L}{2}\right\rfloor\right),1)$ from below, and the point $p^{-}=(\cos\left(\frac{\pi}{L}\left\lceil\frac{L}{2}\right\rceil\right),-1)$ from above.

We established above that the derivative $A'$ has its $2L-2$ roots at each interpolation point as well as between all interpolation points with the same values.
This means that $A'$ must have extrema between each of its roots, which accounts for $2L-3$ extrema.
However, since these correspond to roots of the second-derivative $A''$, which has degree $2L-3$, they must account for all of its roots, i.e., $A$ has points with zero curvature only between its extrema.
Together with the fact that $p^{+}$ is an extremum of $A$ and $A$ approaches $p^{+}$ from below, this implies that $A$ must have negative curvature at $p^{+}$.
Hence, $A$'s next extremum to the right of $p^{+}$ must be below $p^{+}$, so $A$ must approach the next interpolation point to the right of $p^{+}$ from below, and so forth.
This means that the first line in \eqref{poly_constraints_proof_app} holds with equality only at the interpolation points.
A similar argument implies that the second line in \eqref{poly_constraints_proof_app} holds with equality only at the interpolation points.

Also, the above argument implies that $A$ must approach the rightmost point $(1,1)$ from below.
We also established that $A$ cannot have an extremum either at or to the right of $(1,1)$, which means that $A(\lambda)$ must continue to grow for $\lambda\ge1$, i.e., $A(\lambda)\ge1$ for $\lambda\ge1$.
Similarly, we find that $A(\lambda)\le-1$ for $\lambda\le-1$.
This proves the third line in \eqref{poly_constraints_proof_app}.

\end{proof}

\noindent
\textbf{Lemma~\ref{phased_iterate_implementation_lemma}.}
\emph{
    For
    \begin{equation}
    \label{R_def}
        R_\phi\equiv
        \begin{pmatrix}
            1&0\\
            0&e^{i\phi}
        \end{pmatrix}_a
        \otimes\mathds{1},
    \end{equation}
    $W_\phi$ is given by the following sequence of operations:
    \begin{equation}
    \label{Wphi_decomp}
    \begin{split}
        W_{\phi}=&R_\phi(H\otimes\mathds{1})\\
        &\cdot R_\pi\left(\prod_{j\in S_i}\text{ctrl-}e^{-\frac{i\pi Z_j}{L}}\right)\left(\prod_{j\in S_i}e^{-\frac{i\pi Z_j}{2L}}\right)\\
        &\cdot(H\otimes\mathds{1})R_\phi^\dagger,
    \end{split}
    \end{equation}
    where $H$ is the Hadamard gate, $Z_j$ is a single-qubit Pauli-$Z$ acting on code qubit $j$, and the controls are on the ancilla qubit.
}

\begin{proof}
The space that $W_\phi$ acts upon is the tensor product of a single ancilla $|\cdot\rangle_a$ and the computational space (with computational basis states $|q\rangle$) that we want $Z^{(BK)}_i$ to act upon.
We implement $W_\phi$ as follows, for $\mathcal{H}_i$ and $\mathcal{G}_i$ defined by \eqref{Hm_def} and \eqref{Gm_def}, respectively:
\begin{equation}
\begin{split}
    &\begin{pmatrix}
        |0\rangle_a\\
        |1\rangle_a
    \end{pmatrix}
    \otimes|q\rangle
    \xrightarrow{H\otimes\mathds{1}}
    \frac{1}{\sqrt{2}}(|0\rangle\pm|1\rangle)\otimes|q\rangle\\
    &\xrightarrow{e^{-i\mathcal{G}_i}}
    \frac{1}{\sqrt{2}}\Big(|0\rangle\otimes e^{-i\mathcal{G}_i}|q\rangle\pm|1\rangle\otimes e^{-i\mathcal{G}_i}|q\rangle\Big)\\
    &\xrightarrow{\text{ctrl-$e^{2i\mathcal{G}_i}$}}
    \frac{1}{\sqrt{2}}\Big(|0\rangle\otimes e^{-i\mathcal{G}_i}|q\rangle\pm|1\rangle\otimes e^{i\mathcal{G}_i}|q\rangle\Big)\\
    &\xrightarrow{H\otimes\mathds{1}}
    \Big(\frac{e^{-i\mathcal{G}_i}\pm e^{i\mathcal{G}_i}}{2}|0\rangle+\frac{e^{-i\mathcal{G}_i}\mp e^{i\mathcal{G}_i}}{2}|1\rangle\Big)\otimes|q\rangle\\
    &=\begin{pmatrix}
        \cos\mathcal{G}_i&-i\sin\mathcal{G}_i\\
        -i\sin\mathcal{G}_i&\cos\mathcal{G}_i
    \end{pmatrix}
    \begin{pmatrix}
        |0\rangle_a\\
        |1\rangle_a
    \end{pmatrix}
    \otimes|q\rangle\\
    &=\begin{pmatrix}
        \mathcal{H}_i&-i\sqrt{1-\mathcal{H}_i^2}\\
        -i\sqrt{1-\mathcal{H}_i^2}&\mathcal{H}_i
    \end{pmatrix}
    \begin{pmatrix}
        |0\rangle_a\\
        |1\rangle_a
    \end{pmatrix}
    \otimes|q\rangle,
\end{split}
\end{equation}
where the upper (lower) entries in the vector expressions correspond to the upper (lower) values of the $\pm$s and $\mp$s, and the $e^{-i\mathcal{G}_i}$ in the second-to-last line just yields an overall phase.
Hence
\begin{equation}
    W_{\phi=0}=(H\otimes\mathds{1})(\text{ctrl-$e^{2i\mathcal{G}_i}$})e^{-i\mathcal{G}_i}(H\otimes\mathds{1}).
\end{equation}
To obtain $W_\phi$ for $\phi\neq0$, we conjugate this by the phases on the ancilla qubit~\cite{low2019qubitization}, denoted by $R_\phi$ as defined in \eqref{R_def}:
\begin{equation}
\begin{split}
    W_{\phi}&=R_\phi W_0R_\phi^\dagger\\
    &=R_\phi(H\otimes\mathds{1})(\text{ctrl-$e^{2i\mathcal{G}_i}$})e^{-i\mathcal{G}_i}(H\otimes\mathds{1})R_\phi^\dagger.
\end{split}
\end{equation}
Finally, since $\mathcal{G}_i$ is defined by \eqref{Gm_def},
\begin{equation}
\begin{split}
    e^{2i\mathcal{G}_i}&=\exp\left(i\pi\left(\mathds{1}-\frac{1}{L}\sum_{j\in S_i}Z_j\right)\right)\\
    &=-\exp\left(-\frac{i\pi}{L}\sum_{j\in S_i}Z_j\right)\\
    &=-\prod_{j\in S_i}e^{-i\pi Z_j/L},
\end{split}
\end{equation}
so
\begin{equation}
    \text{ctrl-}e^{2i\mathcal{G}_i}=R_\pi\prod_{j\in S_i}\text{ctrl-}e^{-\frac{i\pi Z_j}{L}},
\end{equation}
i.e., $\text{ctrl-}e^{2i\mathcal{G}_i}$ decomposes into a product of single-qubit phases controlled by the ancilla qubit.
Similarly,
\begin{equation}
    e^{-i\mathcal{G}_i}=i\prod_{j\in S_i}e^{-\frac{i\pi Z_j}{2L}},
\end{equation}
with the factor of $i$ on the right-hand side an irrelevant overall phase.
Thus our final decomposition of $W_\phi$ is \eqref{Wphi_decomp}.
\end{proof}

\noindent
\textbf{Corollary~\ref{ansatz_implementation_lemma}.}
\emph{
    We can implement an encoded \emph{hop gate}, which is universal for real-valued wavefunctions with fixed fermion number~\cite{eddins2021entanglementforging}, or the encoding of a unitary generated by any term in the Hamiltonian, using one ancilla qubit and a number of doubly-controlled (three-qubit) gates given by \eqref{term_cost}.
}
\begin{proof}

We first prove the second claim, the construction of an encoded unitary generated by any term in the Hamiltonian, since the proof is simpler.
Let $T$ denote the term in the Hamiltonian that we wish to use to generate a unitary.
In our encoding, $T$ is both unitary and Hermitian, and we can implement it as a unitary with cost given by \eqref{term_cost}.

Let $|x\rangle$ be an arbitrary eigenvector of $T$.
If we can implement a unitary that has the desired action on any arbitrary such $|x\rangle$, it must be exactly equal to the desired unitary, since $T$ is Hermitian and thus it possesses an eigenbasis spanning the Hilbert space.
To implement the unitary $e^{i\theta T}$, introduce a single ancilla qubit $|\cdot\rangle_b$ and perform the following operations:
\begin{equation}
\label{big_step_one}
\begin{split}
    |0\rangle_b\otimes|x\rangle\xrightarrow{H\otimes\mathds{1}}~&\frac{1}{\sqrt{2}}(|0\rangle_b+|1\rangle_b)\otimes|x\rangle\\
    \xrightarrow{\text{ctrl-$T$}}~&\frac{1}{\sqrt{2}}\Big(|0\rangle_b\otimes|x\rangle+|1\rangle_b\otimes(-1)^\tau|x\rangle\Big)\\
    &=\frac{1}{\sqrt{2}}\Big(|0\rangle_b+(-1)^\tau|1\rangle_b\Big)\otimes|x\rangle,
\end{split}
\end{equation}
where $(-1)^\tau$ for $\tau\in\{0,1\}$ is the eigenvalue of $T$ for the state $|x\rangle$.
Continuing from where we left off...
\begin{equation}
\label{big_step_two}
\begin{split}
    &\vdots\\
    =&\frac{1}{\sqrt{2}}\Big(|0\rangle_b+(-1)^\tau|1\rangle_b\Big)\otimes|x\rangle\\
    \xrightarrow{H\otimes\mathds{1}}~&
    \begin{cases}
        |0\rangle_b\otimes|x\rangle\quad\text{if $\tau=0$},\\
        |1\rangle_b\otimes|x\rangle\quad\text{if $\tau=1$}
    \end{cases}\\
    \xrightarrow{R_{-2\theta}}~&
    \begin{cases}
        |0\rangle_b\otimes|x\rangle\quad\text{if $\tau=0$},\\
        e^{2i\theta}|1\rangle_b\otimes|x\rangle\quad\text{if $\tau=1$}
    \end{cases}\\
    \xrightarrow{H\otimes\mathds{1}}~&\frac{e^{-2i\theta\tau}}{\sqrt{2}}\Big(|0\rangle_b+(-1)^\tau|1\rangle_b\Big)\otimes|x\rangle\\
    \xrightarrow{\text{ctrl-$T$}}~&\frac{e^{-2i\theta\tau}}{\sqrt{2}}(|0\rangle_b+|1\rangle_b)\otimes|x\rangle\\
    \xrightarrow{H\otimes\mathds{1}}~&e^{-2i\theta\tau}|0\rangle_b\otimes|x\rangle\\
    &=e^{-i\theta}|0\rangle_b\otimes e^{i\theta T}|x\rangle.
\end{split}
\end{equation}
Thus, we have implemented the desired operation up to an overall phase.
The dominant costs in this construction were the two controlled applications of $T$, each of which has cost given by \eqref{term_cost} but in doubly-controlled gates rather than singly-controlled gates, since \eqref{term_cost} is the cost of a non-controlled application of $T$.

The \emph{hop gate} is a gate that acts on two fermionic modes as
\begin{equation}
    h(\varphi)=
    \begin{pmatrix}
        1&0&0&0\\
        0&\cos\varphi&-\sin\varphi&0\\
        0&\sin\varphi&\cos\varphi&0\\
        0&0&0&-1
    \end{pmatrix},
\end{equation}
which we can decompose as
\begin{equation}
\label{hop_decomp}
    h(\varphi)=
    \begin{pmatrix}
        1&0&0&0\\
        0&\cos\varphi&-\sin\varphi&0\\
        0&\sin\varphi&\cos\varphi&0\\
        0&0&0&1
    \end{pmatrix}
    \begin{pmatrix}
        1&0&0&0\\
        0&1&0&0\\
        0&0&1&0\\
        0&0&0&-1
    \end{pmatrix}.
\end{equation}
In other words, we can think of the hop gate as first applying a controlled phase, and then rotating occupation between the two modes.

It will be useful to first decompose the hop gate into fermionic Pauli operators $X^{(f)}$, $Y^{(f)}$, and $Z^{(f)}$, meaning Pauli operators applied directly as unitaries acting on fermionic modes.
In terms of these, the controlled phase in \eqref{hop_decomp} can be implemented as follows.
Introduce an ancilla qubit initially in state $|0\rangle_b$, distinct from the ancilla $|\cdot\rangle_a$ used for quantum signal processing as described in \cref{encoding_states} in the main text.
Let $|z\rangle$ denote an encoded occupation number state, i.e., an eigenstate of $Z^{(f)}$ acting on every fermionic mode.
Such states are a basis for the fermionic Hilbert space, so if we can implement the controlled phase for an arbitrary $|z\rangle$, then the same implementation will apply it to an arbitrary fermionic state.

If the two modes to which the controlled phase is to be applied are $i$ and $j$, implement
\begin{equation}
\begin{split}
    |0\rangle_b\otimes|z\rangle\xrightarrow{H\otimes\mathds{1}}~&\frac{1}{\sqrt{2}}(|0\rangle_b+|1\rangle_b)\otimes|z\rangle\\
    \xrightarrow{\text{ctrl-}Z^{(f)}_i}~&\frac{1}{\sqrt{2}}\Big(|0\rangle_b\otimes|z\rangle+|1\rangle_b\otimes Z^{(f)}_i|z\rangle\Big)\\
    \xrightarrow{H\otimes\mathds{1}}~&
    \begin{cases}
        |0\rangle_b\otimes|z\rangle\quad\text{if $Z^{(f)}_i|z\rangle=|z\rangle$},\\
        |1\rangle_b\otimes|z\rangle\quad\text{if $Z^{(f)}_i|z\rangle=-|z\rangle$}
    \end{cases}\\
    \xrightarrow{\text{ctrl-}Z^{(f)}_j}~&
    \begin{cases}
        |0\rangle_b\otimes\zeta|z\rangle\quad\text{if $Z^{(f)}_i|z\rangle=|z\rangle$},\\
        |1\rangle_b\otimes\zeta|z\rangle\quad\text{if $Z^{(f)}_i|z\rangle=-|z\rangle$}
    \end{cases},
\end{split}
\end{equation}
where $\zeta=-1$ if $Z^{(f)}_i|z\rangle=-|z\rangle$ and $Z^{(f)}_j|z\rangle=-|z\rangle$, and $\zeta=1$ otherwise, i.e., $\zeta$ is the desired phase due to the controlled phase operation.
All that remains is to uncompute the ancilla, which we can do by reversing the first three operations above.
Thus we implement the controlled phase via three controlled applications of $Z^{(f)}$ operators, together with four single-qubit gates.

Next, we implement the rotation in \eqref{hop_decomp}, which may be rewritten as
\begin{equation}
\label{phased_swap}
\begin{split}
    &\begin{pmatrix}
        1&0&0&0\\
        0&\cos\varphi&-\sin\varphi&0\\
        0&\sin\varphi&\cos\varphi&0\\
        0&0&0&1
    \end{pmatrix}
    =e^{i\varphi(X^{(f)}Y^{(f)}-Y^{(f)}X^{(f)})/2}\\
    &=e^{\frac{i\varphi X^{(f)}Y^{(f)}}{2}}e^{-\frac{i\varphi Y^{(f)}X^{(f)}}{2}}\\
    &=e^{\frac{i\pi I^{(f)}Z^{(f)}}{4}}e^{\frac{i\varphi X^{(f)}X^{(f)}}{2}}e^{-\frac{i\pi I^{(f)}Z^{(f)}}{4}}\\
    &\quad\cdot e^{\frac{i\pi Z^{(f)}I^{(f)}}{4}}e^{\frac{i\varphi X^{(f)}X^{(f)}}{2}}e^{-\frac{i\pi Z^{(f)}I^{(f)}}{4}},
\end{split}
\end{equation}
where tensor product symbols are suppressed, e.g., $X^{(f)}X^{(f)}=X^{(f)}\otimes X^{(f)}$.
Note that this sequence of operations can require intermediate states containing at most two more fermions than the original state, although the final state must have the same fermion number.
As discussed in the main text, to encode the $F$-fermion Hamiltonian we in fact implement the encoding for all states of up to $F+4$ fermions.
Therefore, this also implies that the sequence of operations \eqref{phased_swap} will have the desired action.

The operations in \eqref{phased_swap} are rotations generated by the encoded operators $Z^{(f)}_i$ and $X^{(f)}_iX^{(f)}_j$.
These rotations can be implemented using the same method as for the rotation generated by a term $T$ in the Hamiltonian, above, with the following changes:
\begin{enumerate}
    \item replace the initial state $|x\rangle$ with an eigenstate of the operator generating the rotation;
    
    \item replace the rotation angle $\theta$ in \eqref{big_step_two} with $\pi/2$ or $\pm\pi/4$, depending on which rotation in \eqref{phased_swap} is desired;
    
    \item replace the controlled-$T$ operations in \eqref{big_step_one} and \eqref{big_step_two} with controlled applications of the desired generator.
\end{enumerate}

Hence, we can implement the entire hop gate using $O(1)$ controlled applications of $X^{(f)}_iX^{(f)}_j$ and $Z^{(f)}_i$, as well as $O(1)$ single-qubit gates.
Under the Bravyi-Kitaev mapping, each $X^{(f)}_i$ becomes a product of $O(\log M)$ $X^{(BK)}$ operators, so $X^{(f)}_iX^{(f)}_j$ also becomes a product of $O(\log M)$ $X^{(BK)}$ operators.
Similarly, each $Z^{(f)}_i$ becomes a product of $O(\log M)$ $Z^{(BK)}$ operators.
Each $Z^{(BK)}$ has cost given by \eqref{parity_cost}, so since we require $O(\log M)$ of them the overall cost becomes \eqref{term_cost} in doubly-controlled gates for the same reason as in the construction of the rotation generated by a term in the Hamiltonian.

\end{proof}

\begin{lemma}
\label{fixed_lattice_lemma}
    On a linear qubit architecture, where two-qubit operations can only be performed on adjacent qubits in the line, the number of qubit swaps required to implement $Z^{(BK)}_i$ as defined by \eqref{Z_action} is $O(QL)$, where $Q$ is the number of qubits and $L=2DG+1$ (Eq.~\eqref{L_min} in the main text).
\end{lemma}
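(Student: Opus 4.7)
The plan is to reduce the statement to a simple routing argument using the explicit decomposition of $W_\phi$ given in Lemma~\ref{phased_iterate_implementation_lemma}. Recall from the proof of Theorem~\ref{parity_theorem} that $\mathcal{E}\Big(Z^{(BK)}_i\Big)$ is realized by quantum signal processing with $2L-1$ queries to the phased iterate $W_\phi$, interleaved with single-qubit rotations on the ancilla. By Lemma~\ref{phased_iterate_implementation_lemma}, each such $W_\phi$ consists of (i) single-qubit gates on the ancilla $R_\phi$, $R_\pi$, $H$, (ii) single-qubit rotations $\prod_{j\in S_i}e^{-i\pi Z_j/(2L)}$ on the code qubits, and (iii) $L$ two-qubit controlled phases $\text{ctrl-}e^{-i\pi Z_j/L}$, one between the ancilla and each code qubit $j\in S_i$. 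The operations in (i) and (ii), as well as the QSP rotations on the ancilla between successive $W_\phi$ calls, are single-qubit and therefore incur no swap overhead. Only the $L$ controlled phases per $W_\phi$ need to be compiled onto the linear architecture.

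The main step is then the following routing argument. Arrange the $Q$ qubits linearly and designate some position on the line for the ancilla. To implement the $L$ controlled gates in a single $W_\phi$, sweep the ancilla from one end of the line to the other by a sequence of at most $Q-1$ nearest-neighbor swaps; whenever the ancilla becomes adjacent to a code qubit $j\in S_i$, apply the corresponding $\text{ctrl-}e^{-i\pi Z_j/L}$ at that step. Since a single forward sweep passes by every code qubit exactly once, all $L$ controlled phases are applied in this pass. Reverse the swap sequence with another $Q-1$ swaps to restore the original physical layout. This realizes $W_\phi$ using $2(Q-1)=O(Q)$ swaps.

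Multiplying this per-iterate cost by the number $2L-1$ of queries to $W_\phi$ in the QSP implementation of $\mathcal{E}\Big(Z^{(BK)}_i\Big)$ yields a total of $O(QL)$ swaps, as claimed. As a remark, this bound also holds on any architecture that contains a Hamiltonian path as a subgraph (such as a 2D planar grid), since the same sweeping strategy can be executed along that path.

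There is no serious obstacle here — the argument is entirely combinatorial routing. The only point worth being careful about is that the single-qubit rotations on the ancilla performed between successive $W_\phi$ calls must act on the physical qubit currently playing the ancilla role; restoring the layout at the end of each $W_\phi$ (via the reverse sweep) makes this bookkeeping trivial. One could potentially shave constants by avoiding the reverse sweep and tracking a moving ancilla between iterates, but the asymptotic $O(QL)$ bound is achieved by the straightforward sweep-and-reverse strategy above.
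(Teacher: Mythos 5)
Your proposal is correct and follows essentially the same route as the paper: reduce to the observation that the only two-qubit gates are the $L$ controlled phases in each of the $2L-1$ calls to $W_\phi$, all sharing the ancilla as control, and then route the ancilla along the line with $O(Q)$ swaps per iterate. The paper obtains a per-iterate bound of $3Q/2$ by sending the ancilla to the nearer end and back rather than sweep-and-reverse, but this only changes the constant, not the $O(QL)$ conclusion.
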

\begin{proof}

As shown in the proof of \cref{parity_theorem}, $Z^{(BK)}_i$ is implemented via $2L-1$ applications of the quantum signal processing iterate $W_\phi$.
$W_\phi$ is implement as in \eqref{Wphi_decomp}, so the only two-qubit operations in the implementation of $Z^{(BK)}_i$ are the controlled phases in \eqref{Wphi_decomp}:
\begin{equation}
\label{controlled_phases}
    \prod_{j\in S_i}\text{ctrl-}e^{-i\pi Z_j/L}.
\end{equation}
These $L$ controlled phases are all controlled on the same qubit, the quantum signal processing ancilla ($|\cdot\rangle_a$).

Hence, on a line of qubits, we can successively swap this control qubit along the line so that it is adjacent to each qubit it needs to control, which are the qubits in $S_i$ as in \eqref{controlled_phases}.
Since the controlled phases in \eqref{controlled_phases} all commute, the order in which they are applied is irrelevant.
Therefore, given any initial location of the control qubit in the line, we can classically choose a path whose length is upper bounded by $3Q/2$ that brings the control qubit adjacent to each qubit in $S_i$.
The worst case is when the control qubit is initially in the center of the line, and $S_i$ contains the qubits at both ends of the line, in which case the shortest path is to first swap the control qubit to the end of the line it is closer to, and then swap it back along the whole line.
Since this path brings the control qubit adjacent to all other qubits (not just those in $S_i$), there can be no worse case.

Therefore, each implementation of the sequence of controlled phases and hence each implementation of $W_\phi$ requires at most $3Q/2$ swaps.
Since $Z^{(BK)}_i$ requires $2L-1$ applications of $W_\phi$, it requires
\begin{equation}
    \frac{3Q}{2}(2L-1)=O(QL)
\end{equation}
swaps.

\end{proof}

\section[appendix]{Threshold for outperforming Jordan-Wigner and Bravyi-Kitaev}
\label{threshold_app}

As discussed in the main text, the minimum number of modes for which our encoding is advantageous over Jordan-Wigner and Bravyi-Kitaev occurs when $D=1$.
In this case, $L=2G+1$ for $G=F\lceil\log_2(M+1)\rceil$, so for
\begin{equation}
    L'=\textsc{NextPrime}(2G+1)
\end{equation}
the least prime greater than $2G+1$, the number $(L')^2$ of modes we can encode is greater than the number $L'L$ of qubits.
This means that for
\begin{equation}
\begin{split}
    &LL'=(2G+1)\textsc{NextPrime}(2G+1)\\
    &<M\\
    &\le(L')^2=\big(\textsc{NextPrime}(2G+1)\big)^2
\end{split}
\end{equation}
our encoding is advantageous over Bravyi-Kitaev.

However, depending on the gaps between primes greater $\textsc{NextPrime}(2G+1)$, there may be one or more subsequent ranges of $M$ in which the encoding reduces to Bravyi-Kitaev.
Let
\begin{equation}
    \textsc{NextPrime}^k(2G+1)
\end{equation}
denote the $k$th prime greater than $2G+1$.
Then if for any $k=1,2,...$,
\begin{equation}
\label{jw_condition}
\begin{split}
    &\big(\textsc{NextPrime}^k(2G+1)\big)^2\\
    &<(2G+1)\textsc{NextPrime}^{k+1}(2G+1),
\end{split}
\end{equation}
our encoding will reduce to Bravyi-Kitaev for any values of $M$ contained in
\begin{equation}
\begin{split}
    \Big(\big(&\textsc{NextPrime}^k(2G+1)\big)^2,\\
    &(2G+1)\textsc{NextPrime}^{k+1}(2G+1)\Big],
\end{split}
\end{equation}
since $M$ is larger than
\begin{equation}
    \big(\textsc{NextPrime}^k(2G+1)\big)^2
\end{equation}
the number of modes that can be encoded in
\begin{equation}
    (2G+1)\textsc{NextPrime}^k(2G+1)
\end{equation}
qubits, but smaller than the number of qubits
\begin{equation}
    (2G+1)\textsc{NextPrime}^{k+1}(2G+1)
\end{equation}
required for the next code size.
However, since the gaps between primes are on average logarithmic in the sizes of the primes, for all but at most a few small values of $k$, \eqref{jw_condition} will not hold and thus the corresponding ranges will be empty, so our encoding will be advantageous.
For $L=2G+1$ up to $501$ (corresponding to at least $L^2=251001$ qubits), we directly checked the maximum values of $k$ for which \eqref{jw_condition} holds, and found that in this range $k$ did not exceed four.

\section[appendix]{Application of quantum signal processing construction of fermion operators to segment code of \cite{steudtner2018fermions,steudtner2019fermions}}
\label{application_to_segment_code}

The construction in \cref{encoding_states} in the main text allows us to implement $Z^{(BK)}_i$ as given by \eqref{Z_action}.
In other words, given some set of $L$ qubits for odd $L$, we can implement a $-1$ phase controlled on more than half of the qubits being in state $|1\rangle$, i.e., on the Hamming weight of a computational basis state of the qubits being greater than $L/2$.
This requires $O(L^2)$ one- and two-qubit gates, as in \eqref{parity_cost}.

This operation is exactly that required to implement the ``binary switch" used to implement the ``segment code'' of \cite{steudtner2018fermions,steudtner2019fermions}.
The remainder of the segment code is linear, so the corresponding encoded operations are Pauli operators.
Our $L$ corresponds to $\hat{n}$ in \cite{steudtner2018fermions,steudtner2019fermions}, and in their code $G=F\lceil\log_2(M+1)\rceil$ is replaced by $F$ (which is $K$ in their notation).
Since we require $L$ to be odd, we set
\begin{equation}
    L=\hat{n}=2K+1=2F+1
\end{equation}
instead of $\hat{n}=2K$ as in \cite{steudtner2018fermions,steudtner2019fermions} (i.e., we just use one extra qubit per segment).

Hence, one can implement encoded fermionic operators for the segment code using $O(L^2)=O(F^2)$ one- and two-qubit operations, and the encoding maps each segment of $L+1$ fermionic modes to $L$ qubits.
Therefore, the number of qubits required is
\begin{equation}
\begin{split}
    Q&=\left\lfloor\frac{M}{L+1}\right\rfloor L+\left(M-\left\lfloor\frac{M}{L+1}\right\rfloor(L+1)\right)\\
    &<\frac{M}{L+1}L+L=\left(1-\frac{1}{L+1}+\frac{L}{M}\right)M.
\end{split}
\end{equation}
For $M\gg F\gg1$, this is approximately
\begin{equation}
    Q\approx\left(1-\frac{1}{2F}\right)M=\left(1-\frac{1}{2K}\right)M,
\end{equation}
which is the value quoted from \cite{steudtner2018fermions,steudtner2019fermions}.
As noted in the main text, since this encoding begins to be advantageous over Jordan-Wigner as soon as $M\ge L+1=2F+2$, while our encoding does not become advantageous until $M=\Omega(F^2)$, we recommend using the segment code to bridge this gap in the small-$M$ regime.

\section[appendix]{Polynomials over finite fields}
\label{finite_fields_app}

Let $\mathbb{Z}_n$ denote the ring of integers modulo $n$, i.e.,
\begin{equation}
    \mathbb{Z}_n=\{0,1,2,...,n-1\}
\end{equation}
and addition and multiplication are carried out modulo $n$.
When $n=L'$ for prime $L'$, $\mathbb{Z}_{L'}$ is a field as well as a ring, which roughly means that it also possesses a division operation that satisfies the same properties as the usual division over real or rational numbers.
Furthermore, it is a field of characteristic $L'$, which means that $L'$ is the least number such that
\begin{equation}
    \underbrace{x+x+\cdots+x}_\text{$L'$ copies}=0\quad\forall x\in\mathbb{Z}_{L'},
\end{equation}
which implies that there is no $y\in\mathbb{Z}_{L'}$ such that $yx=0$ for all $x\in\mathbb{Z}_{L'}$.
See \cite{gallian2016contemporary} for a thorough introduction to rings and fields.
All arithmetic operations in this section are assumed to be modulo the order of the ring or field presently under consideration.

For any $D\in\mathbb{Z}_{L'}$, a degree-$D$ polynomial over the finite field $\mathbb{Z}_{L'}$ is a formal expression
\begin{equation}
\label{general_polynomial}
    c_0x^0+c_1x^1+\cdots+c_Dx^D,
\end{equation}
where the $c_i$ are coefficients in $\mathbb{Z}_{L'}$, and $x$ is the variable or \emph{indeterminate}.
For our purposes, we can think of a formal polynomial as equivalent to the list of its coefficients, which uniquely specify it.
A formal polynomial induces a function $f:\mathbb{Z}_{L'}\rightarrow\mathbb{Z}_{L'}$, called the \emph{induced polynomial function}, by replacing the variable $x$ with a value in $\mathbb{Z}_{L'}$ and evaluating the resulting expression modulo $L'$.
In the main text, we simply referred to these functions themselves as polynomials, for simplicity, but here we will explicitly refer to them as (induced) polynomial functions.

Over general finite fields, distinct formal polynomials can induce the same polynomial function.
However, over $\mathbb{Z}_{L'}$ (for prime $L'$) all distinct formal polynomials of degree less than $L'$ induce distinct polynomial functions.
This follows from the well-known fact that every function over a finite field is a polynomial function.
To see how our desired property follows, first note that there are $(L')^{L'}$ distinct functions over $\mathbb{Z}_{L'}$.
Next, using Fermat's Little Theorem, which states that $x^{L'}=x$ modulo $L'$ for any $x\in\mathbb{Z}_{L'}$, we can reduce any arbitrary polynomial function to a polynomial function of degree less than $L'$.
Note that we cannot reduce away $x^{L'-1}$ if its coefficient is nonzero, because Fermat's Little Theorem only implies $x^{L'-1}=1$ for nonzero $x\in\mathbb{Z}_{L'}$.

Hence, the polynomial function induced by any arbitrary formal polynomial is identical to the polynomial function induced by a formal polynomial of degree less than $L'$, so every function over $\mathbb{Z}_{L'}$ is a polynomial function induced by a formal polynomial of degree less than $L'$.
A formal polynomial of degree less than $L'$ over $\mathbb{Z}_{L'}$ is uniquely characterized by its $L'$ coefficients (of $x^0,x^1,...,x^{L'-1}$, allowing any of the coefficients to be zero), so there are $(L')^{L'}$ formal polynomials of degree less than $L'$ over $\mathbb{Z}_{L'}$.
Therefore, all of these must induce distinct polynomial functions, because if any pair of them induced the same polynomial function then there would not be enough of them to match all of the $(L')^{L'}$ general functions.

In the main text, we do not use all polynomial functions of degree less than the order of the field, but only those up to some fixed degree $D$.
However, this $D$ is always less than the order of the field, so all such polynomial functions are distinct, which justifies our claim in the main text that there are $(L')^{D+1}$ of them.

That a degree-$D$ polynomial function over $\mathbb{Z}_{L'}$ can have at most $D$ roots follows similarly to the argument over the real numbers.
Polynomials over finite fields admit polynomial long division, so a polynomial function $f$ whose roots form a multiset $R\subseteq\mathbb{Z}_{L'}$ (including multiple copies of roots with multiplicities greater than one) can be factored as
\begin{equation}
    f(x)=g(x)\prod_{r\in R}(x-r)
\end{equation}
where $g(x)$ is some other polynomial over $\mathbb{Z}_{L'}$.
Thus since the product over $R$ is itself a polynomial of degree $|R|$, the number of roots, the degree of $f$ must be at least the number of roots.
These are the main facts about polynomials over finite fields used in the main text.

\section[appendix]{Hermite interpolation}
\label{hermite_interpolation_app}

Hermite interpolation is a method for finding the least-degree polynomial (over the real numbers) that satisfies a certain set of constraints.
A special case, Newton interpolation, applies when the constraints are simply a set of specified points, i.e., function values at particular inputs.
In this case, when $n$ points are specified, the least-degree polynomial that passes through the points has degree $n-1$: for example, any single point defines the constant polynomial whose value is the value at the point, any pair of points defines a line, and so forth.

Hermite interpolation generalizes this to cases where up to $m$th derivatives are also specified at each point.
Different numbers of derivatives can be specified for different points.
In this case, each derivative and each point is a constraint, and if there are $n$ constraints in total then again the least-degree polynomial that satisfies the constraints has degree $n-1$.
For a thorough review of Hermite interpolation, see \cite{burden2015numerical}.

In this section, we will instead illustrate Hermite interpolation by showing how to implement it for the specific example in the main text.
In that example, for some odd $L$, the points are \eqref{hermite_points}, which we reproduce here for convenience
\begin{equation}
\begin{split}
    &\{\left(\cos\left(\frac{m\pi}{L}\right),1\right)~|~m=0,1,2,...,\left\lfloor\frac{L}{2}\right\rfloor\}\\
    &\cup\{\left(\cos\left(\frac{m\pi}{L}\right),-1\right)~|~m=\left\lfloor\frac{L}{2}\right\rfloor+1,...,L\}.
\end{split}
\end{equation}
The derivative constraints are that the first derivatives be zero at all points except for the first and last ($m=0$ and $m=L$).

Let us translate these constraints into a more general language: the value and derivatives at each point will be written as a list of numbers $(x_i,f_i,f'_i,f''_i,...)$, which stands for the constraints
\begin{equation}
    f(x_i)=f_i,\quad f'(x_i)=f'_i,\quad f''(x_i)=f''_i,...
\end{equation}
where $f$ is the polynomial we are trying to construct.
In this notation, the constraints we stated above may be rewritten
\begin{equation}
\label{constraints}
\begin{split}
    &(x_0,f_0)=(1,1),\\
    &(x_i,f_i,f'_i)=(\cos\left(\frac{i\pi}{L}\right),1,0)\quad\text{for $i=1,2,...,\left\lfloor\frac{L}{2}\right\rfloor$},\\
    &(x_i,f_i,f'_i)=(\cos\left(\frac{i\pi}{L}\right),-1,0)\quad\text{for $i=\left\lceil\frac{L}{2}\right\rceil,...,L-1$},\\
    &(x_L,f_L)=(-1,-1).
\end{split}
\end{equation}

To implement Hermite interpolation, we construct a second list $\{z_i\}$.
$\{z_i\}$ should be a list of the $x_i$s, in order, but with each $x_i$ duplicated if its first derivative is specified: in other words,
\begin{equation}
\begin{split}
    &z_0=x_0,\\
    &z_1=z_2=x_1,\\
    &z_3=z_4=x_2,\\
    &\qquad\vdots\\
    &z_{2i-1}=z_{2i}=x_i,\\
    &\qquad\vdots\\
    &z_{2L-3}=z_{2L-2}=x_{L-1},\\
    &z_{2L-1}=x_L.
\end{split}
\end{equation}
Then the expression for the Hermite interpolating polynomial is
\begin{equation}
\label{interpolating_polynomial_expression}
    f(x)=\sum_{i=0}^{2L-1}f[z_i,z_{i-1},...,z_1,z_0]\prod_{j=0}^{i-1}(x-z_j),
\end{equation}
where the product $\prod_{j=0}^{-1}(x-z_j)\equiv1$, and $f[z_i,z_{i-1},...,z_1,z_0]$ is the \emph{divided difference} of $f$, defined below.
We can see that is a degree-$(2L-1)$ polynomial, as we expect, since there are $L+1$ value constraints and $L-1$ first derivative constraints, for $2L$ constraints in total.
We do not justify why \eqref{interpolating_polynomial_expression} is the correct expression, leaving that to one of the many texts on the subject, such as \cite{burden2015numerical}.
Instead we will conclude by defining the divided difference, which enables evaluation of the above expression.

The divided difference of $f$ is defined recursively as follows.
First, to gain an intuition, if all of the $z_i$s were distinct then the divided difference would be defined by
\begin{equation}
\label{dd_recursion}
\begin{split}
    &f[z_i,z_{i-1},...,z_1,z_0]\\
    &\quad=\frac{f[z_i,z_{i-1},...,z_1]-f[z_{i-1},...,z_1,z_0]}{z_i-z_0},
\end{split}
\end{equation}
with the base case given by $f[z_j]=f(z_j)$.
Hence, one can think of the divided difference $f[z_i,z_{i-1},...,z_1,z_0]$ as analogous to an $i$th numerical derivative.

However, since in our case many adjacent pairs $z_{i+1}$ and $z_i$ are equal, the recursion above would become undefined when two arguments remain in the divided differences, e.g., $f[z_2,z_1]=\frac{f[z_2]-f[z_1]}{z_2-z_1}$ is undefined because $z_2=z_1$.
This is where the derivative constraints enter.
In our case, we define the base case at the level of two arguments as follows:
\begin{equation}
\label{dd_base_case_1}
    f[z_{i+1},z_i]=
    \begin{cases}
        \frac{f(z_{i+1})-f(z_i)}{z_{i+1}-z_i}\quad\text{if $z_{i+1}\neq z_i$},\\
        f'(z_i)\quad\text{if $z_{i+1}=z_i$}
    \end{cases}.
\end{equation}
In other words, since for example $f[z_2,z_1]=\frac{f[z_2]-f[z_1]}{z_2-z_1}$ is undefined because $z_2=z_1$, we replace it with the specified derivative at that point.
The recursion relation \eqref{dd_recursion} remains the same, but terminates at two arguments instead of one.

\eqref{dd_base_case_1} simplifies considerably.
Note that in terms of the $z_i$s, the constraints \eqref{constraints} become
\begin{equation}
    f(z_i)=f\left(x_{\left\lfloor\frac{i+1}{2}\right\rfloor}\right)=f_{\left\lfloor\frac{i+1}{2}\right\rfloor},
\end{equation}
and for $i=1,2,...,2L-2$,
\begin{equation}
    f'(z_i)=f'\left(x_{\left\lfloor\frac{i+1}{2}\right\rfloor}\right)=f'_{\left\lfloor\frac{i+1}{2}\right\rfloor}=0.
\end{equation}
Inserting these in \eqref{dd_base_case_1} yields
\begin{equation}
\label{dd_base_case_2}
    f[z_{i+1},z_i]
    =
    \begin{cases}
        \frac{f_{\left\lfloor\frac{i+2}{2}\right\rfloor}-f_{\left\lfloor\frac{i+1}{2}\right\rfloor}}{z_{i+1}-z_i}\quad\text{if $z_{i+1}\neq z_i$},\\
        0\quad\text{if $z_{i+1}=z_i$}
    \end{cases}.
\end{equation}
But most of the specified values are identical: we can see from \eqref{constraints} that $f_{\left\lfloor\frac{i+2}{2}\right\rfloor}-f_{\left\lfloor\frac{i+1}{2}\right\rfloor}=0$ unless
\begin{equation}
    \left\lfloor\frac{i+2}{2}\right\rfloor=\left\lceil\frac{L}{2}\right\rceil\quad\text{and}\quad\left\lfloor\frac{i+1}{2}\right\rfloor=\left\lfloor\frac{L}{2}\right\rfloor,
\end{equation}
which simplifies to
\begin{equation}
    i=L-1
\end{equation}
because $L$ is odd.
In this case, from \eqref{constraints} we see that $f_{\left\lfloor\frac{i+2}{2}\right\rfloor}-f_{\left\lfloor\frac{i+1}{2}\right\rfloor}=-2$, so \eqref{dd_base_case_2} becomes
\begin{equation}
\label{dd_base_case_3}
    f[z_{i+1},z_i]
    =\begin{cases}
        \frac{-2}{\cos\left(\frac{\pi}{L}\left\lceil\frac{L}{2}\right\rceil\right)-\cos\left(\frac{\pi}{L}\left\lfloor\frac{L}{2}\right\rfloor\right)}\quad\text{if $i=L-1$},\\
        0\quad\text{otherwise}.
    \end{cases}
\end{equation}
Combining this base case with the recursion \eqref{dd_recursion} yields all divided differences in the Hermite polynomial \eqref{interpolating_polynomial_expression}.

Note that the recursion relation \eqref{dd_recursion} might lead to a worry that evaluating the divided difference requires exponential time.
In fact, it can be compute efficiently as follows.
First, evaluate all of the two-argument divided differences as given by \eqref{dd_base_case_3}, of which there are $2L-1$, since there is one for each consecutive pair $z_{i+1},z_i$ and there are $2L$ $z_i$s.
Next, evaluate all of the three-argument divided differences, each of which is calculated by taking the difference of a consecutive pair of two-argument divided differences and dividing it by a difference between points, e.g.,
\begin{equation}
    f[z_{i+2},z_{i+1},z_i]=\frac{f[z_{i+2},z_{i+1}]-f[z_{i+1},z_i]}{z_{i+2}-z_i}.
\end{equation}
Hence, the number of three-argument divided differences is one fewer than the number of two-argument divided differences.
Then evaluate the four-argument divided differences using the three-argument divided differences in the same way, and so forth.

In this way, we build up a pyramid (called a \emph{divided differences table}) of all of the divided differences, where moving up the pyramid corresponds to divided differences with more arguments.
Since the base of the pyramid (the two-argument divided differences) has size $2L-1$, the total number of divided differences we need to evaluate to build up the whole pyramid is $O(L^2)$.

\end{document}